\newtheorem{theorem}{Theorem}{\bfseries}{\itshape}
\newtheorem{lemma}{Lemma}{\bfseries}{\itshape}
{\bfseries}{\itshape}
{\bfseries}{\itshape}
\newtheorem{obs}{Observation}{\bfseries}{\itshape}
\theoremstyle{definition}
\theoremstyle{plain}
\definecolor{linkblue}{named}{Blue}
\newcommand{\R}{\mathbb{R}}
\DeclarePairedDelimiter{\abs}{\lvert}{\rvert}
\DeclarePairedDelimiter{\floor}{\lfloor}{\rfloor}
\DeclarePairedDelimiter{\ceil}{\lceil}{\rceil}
\let\epsilon\varepsilon
\newcommand{\jit}[1]{}
\newcommand{\vomit}[1]{}
\begin{document}

\title{Computational aspects of disks enclosing many points\footnote{This is an extended version of a paper appearing in Proceedings of LAGOS 2025.}}
\author{%
  Prosenjit Bose%
  \thanks{\affil{Carleton University}, 
          \email{jit@scs.carleton.ca},
          \email{tylertuttle@cmail.carleton.ca}},
  Guillermo Esteban%
  \thanks{\affil{Universidad de Alcalá},
          \email{g.esteban@uah.es}},
  and Tyler Tuttle\footnotemark[2]
}
\date{}
\maketitle

\begin{abstract}
Let $S$ be a set of $n$ points in the plane. We present several different algorithms for finding a pair of points in $S$ such that any disk that contains that pair must contain at least $cn$ points of $S$, for some constant $c>0$. The first is a randomized algorithm that finds a pair in $O(n\log n)$ expected time for points in general position, and $c = 1/2-\sqrt{(1+2\alpha)/12}$, for any $0<\alpha<1$. The second algorithm, also for points in general position, takes quadratic time, but the constant $c$ is improved to $1/2-1/{\sqrt{12}} \approx 1/4.7$. The second algorithm can also be used as a subroutine to find the pair that maximizes the number of points inside any disk that contains the pair, in $O(n^2\log n)$ time. We also consider variants of the problem. When the set $S$ is in convex position, we present an algorithm that finds in linear time a pair of points such that any disk through them contains at least $n/3$ points of $ S $. For the variant where we are only interested in finding a pair such that the diametral disk of that pair contains many points, we also have a linear-time algorithm that finds a disk with at least $n/3$ points of $S$. Finally, we present a generalization of the first two algorithms to the case where the set $S$ of points is coloured using two colours. We also consider adapting these algorithms to solve the same problems when $S$ is a set of points inside of a simple polygon $P$, with the notion of a disk replaced by that of a geodesic disk.
\end{abstract}

\section{Introduction}
\label{sec:introduction}

Let $S$ be a set of $n$ points in the plane. A disk \emph{contains} or \emph{encloses} a point $p \in S $ if $p$ is on the boundary or in the interior of the disk. For any two distinct points $p, q \in S$, we define $C_S(p,q)$ to be the smallest integer such that any disk enclosing $p$ and $q$ encloses at least $C_S(p,q)$ points of~$S$. Let $\Pi(S)$ be the largest value of $C_S(p,q)$ over all pairs $\{p,q\}$ of points in $S$, and let $\Pi(n)$ be the minimum value of $\Pi(S)$ over all sets~$S$ of $n$ points. The problem of finding the exact value of~$\Pi(n)$ was introduced in 1988 by Neumann-Lara and Urrutia~\cite{neumann-lara88}. They show that there exists a pair of points in $S$ such that any disk that contains that pair must contain at least $cn$ points of $S$, for some constant $c>1/60$. Over the next year or so, the constant factor in this result was improved, culminating in a proof by Edelsbrunner et al.~\cite{edelsbrunner89} that there exists a pair of points such that any disk containing that pair also contains at least $n/4.7$ points of $S$. Recently, Bose et al.\ studied this problem in the setting of a simple polygon instead of in the plane~\cite{wadspaper}.


These results all prove the existence of pairs of points, but none of them consider the algorithmic problem of finding such a pair. That is the focus of this paper. Many of the ideas used to prove existence will be useful in providing us with algorithms both in the Euclidean setting and the geodesic setting.

\subsection{Previous work}
\label{sec:previous-work}  
The combinatorial problem was introduced by Neumann-Lara and Urrutia~\cite{neumann-lara88} in 1988, who proved that there exists a pair of points $p$, $q$ in $S$ with the property that any disk that contains $p$ and $q$ contains at least $\ceil{(n-2)/60}$ points of $S$. They also defined $\overline{\Pi}(n)$ to be the minimum value of $\Pi(S)$ over all sets $S$ of $n$ points in \textit{convex position}, showing that $\overline{\Pi}(n) \ge \ceil{(n-2)/4}$. Subsequently, B\'ar\'any et al.~\cite{barany89} 
improved the lower bound to $\Pi(n) \ge n/30$.
Simultaneously, the problem was considered by Hayward et al.~\cite{hayward89a}, who proved that $\floor{n/27} + 2 \le \Pi(n) \le \ceil{n/4} + 1$, and $\overline{\Pi}(n) = \ceil{n/3} + 1$. The lower bound for point sets in non-convex position was later improved by Hayward~\cite{hayward89b} to $\Pi(n) \ge \ceil{5(n - 2)/84}$.
Finally, Edelsbrunner et al.~\cite{edelsbrunner89} proved that
\begin{equation*}
\Pi(n) \ge \frac{n - 3}{2} - \sqrt\frac{(n-2)^2 - 1}{12} = \bigg(\frac{1}{2} - \frac{1}{\sqrt{12}}\bigg)n + O(1) \approx \frac{n}{4.7},
\end{equation*}
which stands as the best lower bound on $\Pi(n)$. An alternate proof of the same bound was given by Ramos and Via\~na~\cite{ramos09}, who showed the stronger result that there always exist two points $p$ and $q$ with the property that any disk which has $p$ and $q$ on its boundary contains at least $n/4.7$ points, and there are at least $n/4.7$ points outside of the disk. Another proof of this fact was given by Claverol et al.~\cite{claverol21} using higher-order Voronoi diagrams.

Variations of the problem have been studied over the years. Akiyama et al.~\cite{akiyama96} proved that there exist two points $p$ and $q$ such that the disk with segment $pq$ as a diameter encloses at least $n/3$ points. B\'ar\'any et al.~\cite{barany89} considered the problem in $\R^d$. Specifically, they showed that given a set~$S$ of $n$ points in $\R^d$, there exists a subset $T$ of $S$ of size $\floor{(d+3)/2}$ such that any d-dimensional ball that contains $T$ contains at least $cn$ points of $S$, for some constant $c > 0$. B{\'a}r{\'a}ny and Larman generalized this further from balls to quadric surfaces~\cite{barany1990combinatorial}. In addition, a bichromatic version of the problem has been studied in~\cite{claverol21,prodromou2007combinatorial,urrutiaproblemas}, where the goal is to find a bichromatic pair of points such that any disk that contains the pair must contain many points. The author of~\cite{prodromou2007combinatorial} also generalizes this coloured version of the problem to higher dimensions.

Many of these variations were generalized from the plane to sets of points contained in a simple polygon by Bose et al.\ \cite{wadspaper}. They show that given a set $S$ of $n$ points in a simple polygon~$P$, there exist two points of~$S$ such that any geodesic disk that contains those two points on its boundary contains at least $n/5$ points of~$S$. For point sets in geodesically convex position, they prove that there exists a pair such that any geodesic disk containing the pair contains at least $n/3$ points, matching the bound of the Euclidean setting. They also consider the diametral and bichromatic versions of the problem, as well as the version for points both inside and outside the disk.

\subsection{Our results}

In this paper, we present different algorithms which, given a set $S$ of $n$ points, finds a pair of points such that any disk containing those points contains a constant fraction of the points of $S$. The first two algorithms will be for sets of points in general position. One is a randomized algorithm which finds, in $O(n\log n)$ expected time, a pair of points $p$ and $q$ such that any disk that contains~$p$ and $q$ contains at least $cn$ points of $S$, for any constant $0<c<1/4.7$. The constant in the running time of this algorithm depends on the value of $c$. The second is a quadratic-time algorithm that finds a pair of points $p$ and $q$ such that any disk that contains $p$ and $q$ contains at least $n/4.7$ points of $S$, matching the best-known lower bound for $\Pi(n)$ This second algorithm can also be used to find the pair that maximizes $C_S(p,q)$ in $O(n^2\log n)$ time. The third algorithm is for points in convex position, and finds a pair of points $p$ and $q$ such that any disk that contains $p$ and $q$ contains at least $n/3$ points of $S$, which is the optimal bound for $\overline{\Pi}(n)$. The third algorithm runs in linear time. Fourth we have an algorithm for the diametral version of the problem, which finds two $p$ and $q$ such that the disk with $pq$ as a diameter contains at least $n/3$ points of $S$ in $O(n)$ time. Finally, we show how to modify the first two algorithms to work for the bichromatic version of the problem: given a set $S$ of $n/2$ red and $n/2$ blue points, find a red-blue pair such that any disk containing that pair contains at least $cn$ points of $S$, for a constant $c$.

We also discuss how these algorithms can be generalized for sets of points in a simple polygon. Care must be taken since many properties that are trivial in the plane become more difficult in this setting. For example, finding the centre of the disk through three points in the plane takes constant time. In a simple polygon such a disk may not even exist. We will show how to handle all of the difficulties that arise.

\section{Background}
\label{sec:background}

In this section, we give the required background needed for our algorithms, both in the plane (Euclidean setting) and in a simple polygon (geodesic setting).

\subsection{Euclidean setting}

Given two points $p$ and $q$, we denote the line segment joining $p$ and $q$ by $pq$, its length by $\lvert pq\rvert$, the line through $p$ and $q$ by $L(p,q)$, and the bisector of $p$ and $q$ by $b(p,q)$. The line $L(p,q)$ splits the plane into two open halfplanes. Let $H(p,q)$ be the halfplane to the left of $L(p,q)$ oriented from~$p$ to $q$, and $H(q,p)$ be the halfplane on the right side of $L(p,q)$. We say that a disk $D$ \emph{contains} a point $p$ if $p$ lies in the interior or on the boundary of $D$. We say that $D$ \emph{passes through} $p$ if $p$ lies on the boundary of~$D$.

Given a set $S$ of points in the plane and two points $p$ and $q$ of $S$, the minimum integer $k$ such that any disk that contains $p$ and $q$ contains at least $k$ other points of $S$ is denoted $C_S(p,q)$. The minimum integer $k$ such that any disk with $p$ and $q$ on its boundary contains at least $k$ other points of $S$ and at most $n-k-2$ other points of $S$ (or equivalently at least $k$ points of $S$ are outside the disk) is denoted $\tilde{C}_S(p,q)$.

For the rest of the paper we will assume that~$S$ is in general position, meaning no three points of $S$ are collinear and no four points of $S$ are cocircular.

\subsection{Geodesic setting}

Let $P$ be a simple polygon. Given a pair of points $p$ and $q$ in $P$, the \emph{geodesic (shortest) path} from $p$ to $q$ is denoted $g(p,q)$. It is known that $g(p,q)$ is a polygonal chain~\cite{DBLP:journals/networks/LeeP84}. The length of this path is denoted $\abs{g(p,q)}$. The set of all points in $P$ equidistant from both $p$ and $q$ is called the \emph{bisector} of $p$ and $q$, which we denote $b(p,q)$. Under the assumption that no vertex of $P$ is equidistant from $p$ and $q$, it is known that $b(p,q)$ is a simple piecewise-defined curve made up of line segments and hyperbolic arcs~\cite{DBLP:journals/algorithmica/Aronov89}. The \emph{extension path} from $p$ to $q$, denoted $\ell(p,q)$, is defined by extending the first and last segments of $g(p,q)$ until they properly intersect the boundary of $P$. The extension path $\ell(p,q)$ naturally partitions the polygon into three parts: the set of points to the left of $\ell(p,q)$, the set of points to the right of $\ell(p,q)$, and $\ell(p,q)$ itself. We say that a point $u$ is to the left (resp.\ right) of $g(p,q)$ if $u$ is to the left (resp.\ right) of $\ell(p,q)$.

A subset $Q$ of $P$ is called \emph{geodesically convex} if, for all pairs of points $p$ and $q$ in $Q$, the geodesic path $g(p,q)$ is contained in $Q$. The \emph{geodesic convex hull} of a set $S$ of points in~$P$ is the smallest geodesically convex set that contains $S$. We say that $S$ is in \emph{geodesically convex position} if every point of $S$ is on the boundary of the geodesic convex hull of $S$. The \emph{geodesic triangle} on three points $ p,q,u \in P $, denoted $\triangle pqu$, is the region bounded by $ g(p,q) $, $ g(q,u) $, and $ g(u,p) $.

A \emph{geodesic disk} centred at $ o \in P $ with radius $ \rho \geq 0 $ is the set $ D(o,\rho) = \{p \in P : |g(o,p)| \leq \rho \} $. Similarly to in the plane, we say that $D(o,\rho)$ contains $p$ if $\abs{g(o,p)}\le \rho$, and that $D(o,\rho)$ passes through $p$ if $\abs{g(o,p)}=\rho$.

A set $S$ of at least three points is called \emph{geodesically collinear} if there exist two points $p$ and $q$ in $S$ such that $S$ is a subset of $g(p,q)$. A set $S$ of at least four points is called \emph{geodesically cocircular} if there exists a geodesic disk $D(o,\rho)$ such that $S$ is contained on its boundary, i.e.\ $\abs{g(o,p)}=\rho$ for all $p$ in $S$. As in the planar case, we make a general position assumption about any set $S$ of points in a simple polygon: no three points are geodesically collinear, no four points are geodesically cocircular, and no two points are equidistant from a vertex of $P$.

\subsection{Higher-order Voronoi diagrams}

Since many of our results in both the Euclidean and geodesic settings rely on higher-order Voronoi diagrams, we give a brief overview. For a more detailed treatment, see Edelsbrunner~\cite{DBLP:series/eatcs/Edelsbrunner87}. We will define the diagrams in the Euclidean setting, but note that the definitions all carry over to the geodesic setting with appropriate modifications: Euclidean distance replaced by geodesic distance, etc.

Let $S$ be a set of points in the plane. The Voronoi region of a point $p$ of $S$ is the set of all points in the plane closer to $p$ than to any other point of $S$. It is well-known that the Voronoi region of $p$ is nonempty (since it must contain $p$) and connected, forms the interior of a polygon, and the edges of this polygon are segments of bisectors of the form $b(p,q)$, where $q$ is another point of $S$. The Voronoi diagram of $S$ is the planar subdivision defined by these Voronoi regions, and we denote it by $V(S)$.

The order-$k$ Voronoi diagram is defined as follows. Let $T$ be a subset of $S$ of size $k$. The order-$k$ Voronoi region of $T$ is the set of all points in the plane closer to every point of $T$ than to any point of $S\setminus T$. That is, the set of points $r$ such that $\abs{pr} < \abs{qr}$ for all points $p$ in $T$ and $q$ in $S\setminus T$. Note that the standard Voronoi diagram is the order-$1$ Voronoi diagram. We denote the order-$k$ Voronoi diagram by $V_k(S)$. The order-$(n-1)$ Voronoi diagram is also called the farthest-point Voronoi diagram.

Unlike the order-$1$ Voronoi diagram, the Voronoi region of a set $T$ may be empty. However, like the order-$1$ Voronoi regions any nonempty order-$k$ Voronoi region must be the interior of a polygon whose edges are bisectors of the form $b(p,q)$, where $p$ is in $T$ and $q$ is in $S\setminus T$.

\section{Algorithms in the plane}

In this section we present three algorithms for finding a pair of points $ p $ and $q$ in the Euclidean plane with $C_S(p,q)\ge cn$, for some $c > 0$.

\subsection{Computing the minimum weight segment on a bisector}
\label{sec:weightplane}

We first present an algorithm that, given a set $S$ of $n$ points and two points $p$ and $q$ in $S$, computes $C_S(p,q)$ and $\tilde{C}_S(p,q)$ in $O(n\log n)$ time.
Consider the set of all disks through $p$ and $q$. There are $n-2$ such disks that pass through a third point of $S$.
The centres of these disks all lie on $b(p,q)$, and split the line $b(p,q)$ into $n-1$ open segments, two of which are unbounded. Two disks through $p$ and $q$ with centre on the same segment~$s$ contain the same set of points. The \emph{weight} of $s$ is defined to be the number of points in the interior of any disk through $p$ and $q$ centred on $s$. We denote the weight of $s$ by $\omega(s)$. Note that $p$ and $q$ are on the boundary of the disk, not in the interior, so if no other point is inside the disk then $\omega(s)$ is equal to $0$. Also note that a circle through $p$ and $q$ whose centre is on $s$ will have $n-\omega(s)-2$ points on its exterior. Clearly, we have
\begin{align}
    C_S(p,q) &= \min\{\omega(s) : \text{$s$ is a segment of $b(p,q)$}\}\text{, and} \\
    \tilde{C}_S(p,q) &= \min\{\min\{\omega(s), n-\omega(s)-2\} : \text{$s$ is a segment of $b(p,q)$}\}.
\end{align}
Furthermore, two segments $s$ and $s'$ that share an endpoint differ in weight by exactly one. This last observation is key to our algorithm: we can sort the segments along $b(p,q)$, compute the weight of one of the unbounded segments, and then it computes the weights of the segments in order along the bisector efficiently.

\begin{algorithm}
\caption{Computing $C_S(p,q)$ and $\tilde{C}_S(p,q)$}
\label{alg:one}
\begin{algorithmic}[1]
\State For each point $x_i$ in $S\setminus \{p,q\}$, compute the centre $c_i$ of the circle through $p$, $q$, and $x_i$
\State Sort $c_1, c_2, \dots, c_{n-2}$ along $b(p,q)$ from left to right, oriented according to $L(p,q)$
\State $\omega \gets \abs{S \cap H(p,q)}$
\State $C \gets \omega$
\State $\tilde{C} \gets \min\{\omega, n - \omega - 2\}$
\For{$i \gets 1, 2, \dots, n-2$}
    \If{$x_i$ is to the left of $L(p,q)$}
        \State $\omega \gets \omega - 1$
    \Else
        \State $\omega \gets \omega + 1$
    \EndIf
    \State $C \gets \min\{C, \omega\}$
    \State $\tilde{C} \gets \min\{\tilde{C}, \omega, n - \omega - 2\}$
\EndFor
\State \textbf{return} $C$ and $\tilde{C}$
\end{algorithmic}
\end{algorithm}

\begin{figure}
    \centering
    \includegraphics{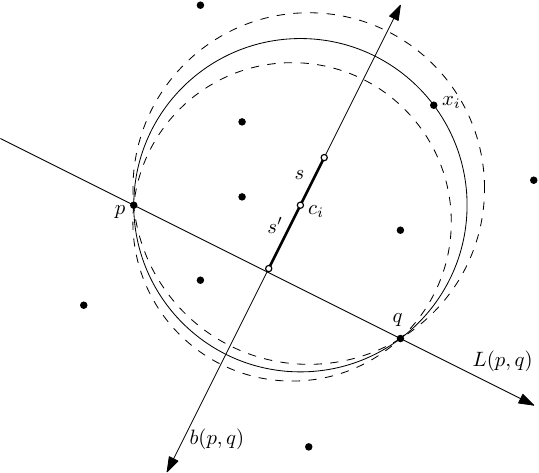}
    \caption{$x_i$ is to the left of $L(p,q)$, thus $\omega(s) = 5$ and $\omega(s') = 4$. The dashed disks are disks through $ p $ and $ q $ and centred at a point in $ s $ and $ s' $.}
    \label{fig:segment-weights}
\end{figure}

\begin{lemma}
\label{lem:alg1}
Let $S$ be a set of $n$ points in the plane, and let $p$ and $q$ be points of $S$. The values of $C_S(p,q)$ and $\tilde{C}_S(p,q)$ can be computed in $O(n \log n)$ time.
\end{lemma}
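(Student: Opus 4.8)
The plan is to prove the lemma by establishing the correctness and running time of Algorithm~\ref{alg:one}. The algorithm rests on the structural facts already recorded above: the $n-2$ centres $c_i$ partition $b(p,q)$ into $n-1$ segments; each segment $s$ has a well-defined weight $\omega(s)$; the displayed characterisations express $C_S(p,q)$ and $\tilde{C}_S(p,q)$ as the minima of $\omega(s)$ and of $\min\{\omega(s),\, n-\omega(s)-2\}$ over all segments $s$; and adjacent segments differ in weight by exactly one. Given these facts, correctness reduces to two claims: that the algorithm initialises $\omega$ to the weight of the leftmost (unbounded) segment, and that each pass of the loop updates $\omega$ to the weight of the next segment along $b(p,q)$.

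First I would verify the initialisation. As the centre of a disk through $p$ and $q$ recedes to infinity along $b(p,q)$, its boundary circle converges to the line $L(p,q)$ and the disk converges to one of the two open halfplanes bounded by $L(p,q)$. With $b(p,q)$ oriented as in the algorithm, the leftmost unbounded segment corresponds to $H(p,q)$, so its weight is exactly $\abs{S\cap H(p,q)}$, the value assigned to $\omega$ before the loop.

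The heart of the argument is the update rule, and this is the step I expect to be the main obstacle. I would fix a point $x_i$ and describe the disks through $p$ and $q$ that contain $x_i$ in terms of their centres. A centre $m$ on $b(p,q)$ is equidistant from $p$ and $q$, so the disk it defines contains $x_i$ precisely when $\abs{mx_i}\le\abs{mp}$; the locus $\abs{mx_i}=\abs{mp}$ on $b(p,q)$ is the unique point equidistant from $p$, $q$, and $x_i$, namely $c_i$. Hence the centres of disks containing $x_i$ form a ray of $b(p,q)$ emanating from $c_i$, so $x_i$ is inside the disk on exactly one side of $c_i$ and outside on the other, confirming the unit change in weight at each crossing. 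It remains to pin down the direction: I would argue that $x_i\in H(p,q)$ exactly when $x_i$ is enclosed by the disks whose centres lie on the side of $c_i$ toward the leftmost segment, so that crossing $c_i$ from left to right removes $x_i$ (decrementing $\omega$), whereas $x_i\in H(q,p)$ causes $x_i$ to be added (incrementing $\omega$). This is precisely the case distinction in the loop. Since the centres are processed in sorted order along $b(p,q)$, an induction on $i$ shows that after processing $c_i$ the variable $\omega$ equals the weight of the $(i+1)$-th segment, and the running minima $C$ and $\tilde{C}$ therefore equal the desired quantities upon termination.

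Finally I would bound the running time. Step~1 computes each centre $c_i$ as the circumcentre of $p$, $q$, $x_i$ in constant time, for $O(n)$ total; step~2 sorts the $n-2$ centres along $b(p,q)$ in $O(n\log n)$ time; and the loop performs $n-2$ iterations of constant work each. The sort dominates, giving the claimed $O(n\log n)$ bound.
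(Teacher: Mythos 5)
Your proposal is correct and follows essentially the same route as the paper's proof: initialise $\omega$ to $\abs{S\cap H(p,q)}$ on the unbounded segment, argue that crossing each centre $c_i$ changes the weight by exactly one in the direction determined by which side of $L(p,q)$ the point $x_i$ lies, and observe that the $O(n\log n)$ sort of the centres dominates the running time. The only difference is that you supply justifications (the limiting-halfplane argument for the initialisation and the ray argument for the unit weight change) that the paper states without proof, which strengthens rather than alters the argument.
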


\begin{proof}
See Algorithm~\ref{alg:one}. Consider the unbounded segment $s$ of $b(p,q)$ bounded by $c_1$. A disk through $p$ and $q$ whose centre is on this segment contains all the points in $S \cap H(p,q)$. Therefore, $\omega(s) = \abs{S \cap H(p,q)}$.

Suppose we know the weight of a segment $s$, and let $s'$ be the adjacent segment along $b(p,q)$, such that~$s'$ is to the right of $s$ from the perspective of the $L(p,q)$. Let $x_i$ be the point of $S$ such that $c_i$ is the shared endpoint of $s$ and $s'$, see Figure~\ref{fig:segment-weights}. If $x_i$ is to the left of the line $L(p,q) $ oriented from $ p $ to $ q $, then $\omega(s')=\omega(s)-1$. Otherwise, if $u$ is to the right of $L(p,q)$, then $\omega(s')=\omega(s)+1$. Thus we can compute the value of $\omega(s')$ from the value of $\omega(s)$ in constant time. Clearly then the algorithm will correctly compute the values of $C_S(p,q)$ and $\tilde{C}_S(p,q)$.

Constructing the list $\{c_1, c_2, \dots, c_{n-2}\}$ of centres takes $O(n)$ time, since finding the centre of a circle through three points in the plane can be done in constant time. Sorting these centres along $b(p,q)$ takes $O(n \log n)$ time. Computing the initial value for $\omega$ of $\abs{S\cap H(p,q)}$ takes $O(n)$ time. Finally, each iteration of the loop takes constant time so the entire loop takes $O(n)$ time. The total running time of this algorithm is dominated by the time it takes to sort the centres.
\end{proof}

\subsection{A randomized algorithm for finding a pair of points}

Now that we have an algorithm for computing $C_S(p,q)$ and $\tilde{C}_S(p,q)$, we present an algorithm for finding a pair $\{p,q\}$ with $\tilde{C}_S(p,q) \ge cn$, for some constant $c > 0$: repeatedly choose a pair $\{p,q\}$ uniformly at random and compute $\tilde{C}_S(p,q)$ until the value of $\tilde{C}_S(p,q)$ is at least $cn$. Notice that this also gives an algorithm for finding a pair with $C_S(p,q)\ge cn$, since $C_S(p,q)\ge\tilde{C}_S(p,q)$. Since---at least in the plane---the best-known lower bounds on $\Pi(n)$ and $\tilde{\Pi}(n)$ are equal up to lower order terms, we only present an algorithm for the case of points inside and outside any disk through $p$ and $q$.

Consider the set of $\binom{n}{2}$ bisectors of pairs of points of $S$. Let $S_k$ be the number of bisectors that have at least one segment of weight less than $k$ or more than $n-k-2$. Ramos and Via\~na~\cite{ramos09} proved that $S_k \le 3(k+1)n-3(k+1)(k+2)$.

\begin{theorem}
\label{thm:alg1}
Let $0<\alpha<1$ be a constant, and let $S$ be a set of $n$ points in the plane.
There are at least $\alpha\binom{n}{2}$ pairs $\{p,q\}$ of points of $S$ such that $\tilde{C}_S(p,q)\ge (1/2-\sqrt{(1+2\alpha)/12})n + O(1)$. Such a pair can be found in $O(n \log n)$ expected time, or with probability $1-1/n$ in $O(n\log^2n)$ time.
\end{theorem}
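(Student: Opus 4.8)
The plan is to separate the statement into a purely combinatorial counting part and an algorithmic part, and to drive both from the Ramos--Via\~na bound $S_k \le 3(k+1)n - 3(k+1)(k+2)$ together with Lemma~\ref{lem:alg1}. First I would recall that, by the discussion preceding the algorithm, $\tilde{C}_S(p,q) = \min_s\min\{\omega(s), n-\omega(s)-2\}$ over the segments $s$ of $b(p,q)$, so a bisector is counted by $S_k$ (has a segment of weight less than $k$ or more than $n-k-2$) exactly when $\tilde{C}_S(p,q) < k$. Consequently the number of pairs with $\tilde{C}_S(p,q)\ge k$ is at least $\binom{n}{2} - S_k$, and the goal reduces to choosing the largest threshold $k$ for which $3(k+1)n - 3(k+1)(k+2) \le (1-\alpha)\binom{n}{2}$, since this forces $S_k \le (1-\alpha)\binom{n}{2}$ and hence at least $\alpha\binom{n}{2}$ good pairs.

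Next I would solve this threshold condition. Writing $k = cn + O(1)$, using $\binom{n}{2} = n^2/2 + O(n)$, and keeping only the leading $n^2$ terms in $3(k+1)(n-k-2) = 3c(1-c)n^2 + O(n)$, the condition becomes $3c(1-c) \le (1-\alpha)/2$, i.e. $3c^2 - 3c + (1-\alpha)/2 \ge 0$. The roots of the corresponding quadratic are $c = \tfrac12 \pm \sqrt{(1+2\alpha)/12}$, which is where the identity $\sqrt{3}/6 = 1/\sqrt{12}$ is used. Because we seek a small positive fraction $c < 1/2$ (and indeed $\tilde{C}_S \le (n-2)/2$ always, so the larger root $c > 1/2$ is vacuous), the meaningful choice is the smaller root $c = \tfrac12 - \sqrt{(1+2\alpha)/12}$. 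Taking $k = cn + O(1)$ with this $c$ then yields the claimed count of at least $\alpha\binom{n}{2}$ pairs with $\tilde{C}_S(p,q) \ge (\tfrac12 - \sqrt{(1+2\alpha)/12})n + O(1)$, establishing the first sentence.

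For the algorithm I would repeatedly sample a pair $\{p,q\}$ uniformly at random, compute $\tilde{C}_S(p,q)$ by Lemma~\ref{lem:alg1} in $O(n\log n)$ time, and halt as soon as the value is at least $k$. By the counting part, a uniformly random pair is good with probability at least $\alpha$, a constant, so the number of trials is dominated by a geometric random variable of success probability $\alpha$, whose expectation is at most $1/\alpha = O(1)$; multiplying by the $O(n\log n)$ cost per trial gives $O(n\log n)$ expected time. For the high-probability statement, the probability that $t$ independent trials all fail is at most $(1-\alpha)^t$, so choosing $t = \lceil \log_{1/(1-\alpha)} n\rceil = O(\log n)$ makes this at most $1/n$, and the total running time becomes $O(n\log^2 n)$.

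I do not anticipate a deep obstacle, since the heavy lifting is supplied by the Ramos--Via\~na bound and Lemma~\ref{lem:alg1}. The one point that genuinely requires care is the reduction to the quadratic and the selection of its root: I must check that rounding $k$ to an integer and discarding the lower-order terms perturb the threshold only by $O(1)$ (so that they are absorbed into the stated $O(1)$), and that the smaller root is the correct one, since it is precisely this choice that produces the constant $\tfrac12 - \sqrt{(1+2\alpha)/12}$ appearing in the statement.
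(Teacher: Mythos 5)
Your proposal is correct and follows essentially the same route as the paper's proof: bound the number of bad bisectors via the Ramos--Via\~na inequality $S_k \le 3(k+1)n-3(k+1)(k+2)$, solve the resulting quadratic in $k$ (taking the smaller root, since the larger exceeds $(n-3)/2$ and is vacuous for $\tilde{C}_S$), and then sample pairs uniformly at random, evaluating each with Lemma~\ref{lem:alg1}, with the same geometric-trials analysis for both the expected-time and high-probability bounds. The only cosmetic difference is that the paper writes down the exact roots of the quadratic while you argue at leading order, which is harmless since the exact root differs from $\bigl(\tfrac12-\sqrt{(1+2\alpha)/12}\bigr)n$ by $O(1)$, exactly the slack the statement allows.
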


\begin{proof}
If $S_k < (1-\alpha)\binom{n}{2}$, then there must be at least $\alpha\binom{n}{2}$ bisectors with no segments of weight less than $k$ or greater than $n-k-2$, meaning there are that many pairs $\{p,q\}$ such that every disk through $p$ and $q$ contains at least $k$ points and at most $n-k-2$ points.

Using Ramos and Via\~na's upper bound on $S_k$ we note that $S_k<(1-\alpha)\binom{n}{2}$ as long as $3(k+1)n-3(k+1)(k+2) < (1-\alpha)\binom{n}{2}$. This is a quadratic equation in $k$ whose smallest solution is
\begin{equation}
    k < \frac{n-3}{2} - \bigg(\frac{(1+2\alpha)n^2 - (4+2\alpha)n + 15}{12}\bigg)^{1/2} = \bigg(\frac{1}{2}-\sqrt{\frac{1+2\alpha}{12}}\bigg)n + O(1).
\end{equation}
The largest solution is greater than $(n-3)/2$, and so falls outside of the range for $\tilde{C}_S(p,q)$. 

We know that with probability $\alpha$, a pair $\{p,q\}$ chosen uniformly at random will have $\tilde{C}_S(p,q)\ge(1/2-\sqrt{(1+2\alpha)/12})n + O(1)$. By Lemma~\ref{lem:alg1}, the value of $\tilde{C}_S(p,q)$ can be computed in $O(n\log n)$ time. The expected number of pairs that must be tested before a success is $\alpha^{-1}$, leading to an expected running time of $O(\alpha^{-1}n\log n)$.

If we choose $\log_{(1-\alpha)^{-1}}(n)$ pairs, then the probability of \emph{no} successes is $(1-\alpha)^{\log_{(1-\alpha)^{-1}}(n)}={1}/{n}$. Thus the probability of success is $1-1/n$, and with high probability we will find a pair with $\tilde{C}_S(p,q)\ge (1/2-\sqrt{(1+2\alpha)/12})n + O(1)$ in $O(n\log^2 n)$ time.
\end{proof}

There is a tradeoff in the value of $\alpha$ that we choose. If $\alpha$ is very close to $1$, meaning the probability of choosing a good pair randomly is high, then the bound on the number of points inside and outside the disks goes to $0$. On the other hand, as $\alpha$ gets closer to $0$ the bound on the number of points inside and outside the disks approaches the bound of $n/4.7$ proved by Ramos and Via\~na.

For instance, if $c=1/2$ then the optimal value of $k$ is roughly $ \left(\frac{1}{2}-\frac{1}{\sqrt{6}}\right)n\approx n/10.9$. Therefore, if we choose a pair $\{p,q\}$ uniformly at random, the probability that $C_S(p,q)\ge n/10.9$ is $1/2$.

We note that by computing $C_S(p,q)$ for every pair $\{p,q\}$ of points, we get a simple brute-force algorithm for finding the pair that maximizes $C_S(p,q)$. Since there are $\binom{n}{2}$ pairs of points this takes $O(n^3\log n)$ time. In the next section we will give an algorithm for finding the pair that maximizes $C_S(p,q)$ that runs in $O(n^2\log n)$ time.
\begin{obs}
Given a set $S$ of $n$ points in the plane, a pair of points $p, q \in S$ that achieves the maximum value of $C_S(p,q)$ or $\tilde{C}_S(p,q)$ over all pairs can be found in $O(n^3 \log n)$ expected time.
\end{obs}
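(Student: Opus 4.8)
The plan is to invoke Lemma~\ref{lem:alg1} as a black box and simply exhaust all pairs. The observation asserts a running time of $O(n^3\log n)$ for finding the pair maximizing $C_S(p,q)$ (or $\tilde{C}_S(p,q)$), and this is exactly the brute-force bound already flagged in the paragraph immediately preceding the statement, so the proof should be short and direct.

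First I would note that there are $\binom{n}{2} = O(n^2)$ unordered pairs $\{p,q\}$ of points in $S$. For each such pair, Lemma~\ref{lem:alg1} lets us compute both $C_S(p,q)$ and $\tilde{C}_S(p,q)$ in $O(n\log n)$ time. The algorithm then iterates over every pair, computes the relevant quantity, and keeps track of the running maximum together with the pair achieving it. Correctness is immediate: since we evaluate $C_S(p,q)$ (respectively $\tilde{C}_S(p,q)$) at every pair, the maximum we record is by definition $\Pi(S)$ (respectively $\tilde{\Pi}(S)$), and we return a witnessing pair. The total time is $O(n^2)\cdot O(n\log n) = O(n^3\log n)$, matching the claimed bound.

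The only subtlety worth a sentence is the word \emph{expected} in the statement. Lemma~\ref{lem:alg1} is deterministic, so strictly the brute-force procedure runs in $O(n^3\log n)$ \emph{worst-case} time; the appearance of ``expected'' in the observation is presumably a carryover from the randomized flavour of the surrounding section and could either be justified by remarking that a worst-case bound trivially implies the same expected bound, or simply stated as deterministic. I would phrase the proof to make clear that the bound holds deterministically, which is at least as strong as what is claimed.

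I do not anticipate any real obstacle here, since all the work is already done by Lemma~\ref{lem:alg1}; the proof is essentially a counting argument multiplying the number of pairs by the per-pair cost. If there is any delicacy at all, it is purely bookkeeping: confirming that a single invocation of Algorithm~\ref{alg:one} yields \emph{both} $C_S(p,q)$ and $\tilde{C}_S(p,q)$ simultaneously (which it does, per the algorithm's return statement), so that the two variants do not incur separate passes and the stated complexity covers both quantities at once.
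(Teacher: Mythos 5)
Your proposal is correct and is essentially identical to the paper's own argument: the observation is justified by the paragraph immediately preceding it, which describes exactly this brute force of running the $O(n\log n)$ algorithm of Lemma~\ref{lem:alg1} over all $\binom{n}{2}$ pairs while tracking the maximum. Your side remark that the bound is in fact deterministic (worst-case), making the word ``expected'' in the statement unnecessarily weak, is accurate and slightly sharper than what the paper states.
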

Since $\Pi(n)\ge n/4.7$, we are guaranteed to find a pair with $C_S(p,q)\ge n/4.7$, however for a given point set there may be a pair with $C_S(p,q)$ as large as $n/2$. This set is constructed as follows: place two points $ p $ and $ q $ anywhere in the plane, then each group of $ (n-2)/2 $ points is placed on each side of the line $ L(p,q) $, very close to the bisector between $ p $ and $ q $. Then, to prevent degenerate cases, the points are moved slightly in such a way that general position of the points is preserved, see Figure~\ref{fig:example}. Every disk through $ p $, $ q $ and a third point encloses $ (n-2)/2 $ points.

\begin{figure}
   \centering
   \includegraphics{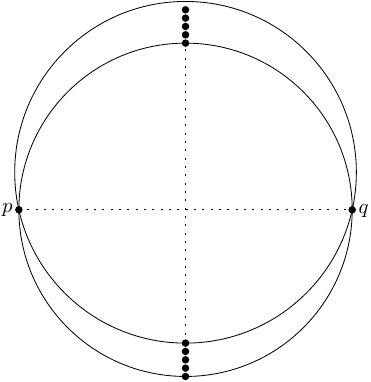}
   \caption{Configuration of $ n $ points where any disk through $ p $ and $ q $ contains at most $ (n-2)/2 $ other points inside it.}
   \label{fig:example}
\end{figure}

\subsection{A quadratic-time algorithm based on higher-order Voronoi diagrams}

In this section we present a quadratic-time algorithm that decides, given a set $S$ and integer $k$, if there exists a pair of points with $\tilde{C}_S(p,q) \ge k$, and finds such a pair if it does exist. Since we know there always exists a pair of points with $\tilde{C}_S(p,q)\ge n/4.7$, we can use this algorithm (with $k$ set to $n/4.7$) to find such a pair in quadratic time. This is a better constant than the algorithm in Theorem~\ref{thm:alg1} for any choice of $\alpha$, but the running time is quadratic instead of $O(n\log n)$. However, we can also use this quadratic algorithm to find the pair that maximizes $\tilde{C}_S(p,q)$ more quickly than the algorithm from the previous section. The new algorithm is based on the observation that the segments of weight $k-1$ are exactly the edges of an order-$k$ Voronoi diagram~\cite{DBLP:journals/tc/Lee82}, see Figure~\ref{fig:voronoi_segments}, for $ k \in \{1, \ldots, n-1\} $. We also need two observations first made by Claverol et al.~\cite{claverol21}. These observations imply that if a bisector contains a segment of weight $k < n/2$, then it must also contain a segment of weight $j$ for all $k \le j \le n/2$.

\begin{obs}
\label{obs:weights}
If one unbounded segment of $b(p,q)$ has weight $j$, then the other unbounded segment has weight $n-j-2$. Furthermore, since the weights of adjacent segments differ by exactly $1$, $b(p,q)$ contains a segment of weight $k$ for all $j\le k\le n-j-2$.
\end{obs}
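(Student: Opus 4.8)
The plan is to prove the two assertions separately, relying on the description of the segments of $b(p,q)$ established in Section~\ref{sec:weightplane}.

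For the first claim, I would analyze the geometric meaning of the two unbounded segments. The centre of a disk through $p$ and $q$ lies on $b(p,q)$, which is perpendicular to $L(p,q)$ and crosses it at the midpoint of $pq$; thus one ray of $b(p,q)$ enters $H(p,q)$ and the other enters $H(q,p)$. As the centre travels to infinity along $b(p,q)$ in one of these directions, the radius grows without bound, the circle through $p$ and $q$ flattens to the line $L(p,q)$, and the disk interior tends to the open halfplane on that side. Hence the unbounded segment on the $H(p,q)$ side has weight $\abs{S\cap H(p,q)}$ and the one on the $H(q,p)$ side has weight $\abs{S\cap H(q,p)}$; this is consistent with the initialization of $\omega$ in the proof of Lemma~\ref{lem:alg1}. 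Since $p$ and $q$ lie on $L(p,q)$, and by general position no other point of $S$ lies on $L(p,q)$, the remaining $n-2$ points are partitioned between the two open halfplanes. Therefore the two unbounded-segment weights sum to $n-2$, so if one equals $j$ the other equals $n-j-2$.

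For the second claim, I would invoke a discrete intermediate value argument. Ordering the $n-1$ segments of $b(p,q)$ consecutively from one unbounded end to the other yields a sequence of integer weights $w_0,w_1,\dots,w_{n-2}$ with $\{w_0,w_{n-2}\}=\{j,\,n-j-2\}$. The observation recorded just before Algorithm~\ref{alg:one}, that adjacent segments differ in weight by exactly $1$, gives $\abs{w_{i+1}-w_i}=1$ for every $i$. A sequence of integers whose consecutive terms differ by exactly $1$ attains every integer value lying between its first and last terms, so for each $k$ with $j\le k\le n-j-2$ some segment has weight exactly $k$.

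I do not expect either step to be a serious obstacle. The only point requiring care is pinning down the limiting-halfplane correspondence in the first claim—verifying that pushing the centre to infinity toward $H(p,q)$ really yields the halfplane $H(p,q)$ in the limit rather than arguing loosely by symmetry. Once that correspondence is fixed, both the summation identity $j+(n-j-2)=n-2$ and the discrete intermediate value conclusion follow immediately.
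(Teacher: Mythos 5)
Your proposal is correct and follows essentially the same reasoning the paper relies on: the paper states this as an observation whose ingredients are already in place---the proof of Lemma~\ref{lem:alg1} establishes that an unbounded segment has weight $\abs{S\cap H(p,q)}$ (with general position forcing the remaining $n-2$ points to split between the two open halfplanes, giving the complementary weight $n-j-2$), and the text before Algorithm~\ref{alg:one} records the $\pm 1$ adjacency that yields your discrete intermediate value step. Your extra care about the limiting-halfplane correspondence is a valid filling-in of a detail the paper leaves implicit, not a departure from its approach.
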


In order to find a pair with $\tilde{C}_S(p,q)\ge k$, we first compute the order-$(k+1)$ Voronoi diagram and the order-$(n-k-1)$ Voronoi diagram. There must exist some pair of points such that no edge of the order-$(k+1)$ Voronoi diagram is a segment of that bisector. Observation~\ref{obs:weights} implies that this bisector must have minimum weight at least $k+1$.

\begin{figure}
    \centering
    \includegraphics{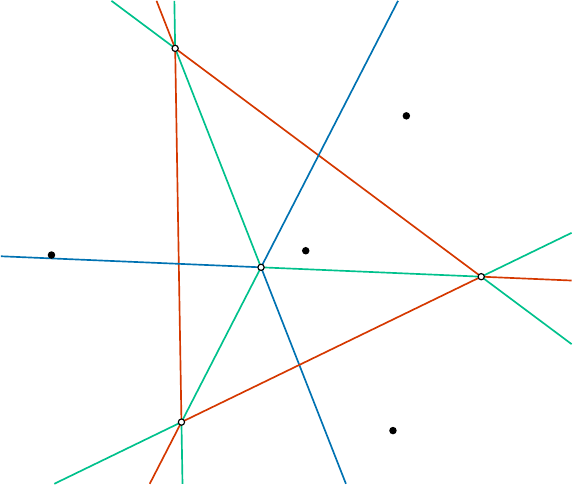}
    \caption{Four points with the order-$1$ Voronoi diagram in red, the order-$2$ Voronoi diagram in green, and the order-$3$ Voronoi diagram in blue. Edges of the order-$k$ Voronoi diagram are the segments of weight $k-1$.}
    \label{fig:voronoi_segments}
\end{figure}

The number of edges of the order-$(k+1)$ Voronoi diagram is known to be $\Theta(nk)$~\cite{DBLP:journals/tc/Lee82}. Since $k$ is roughly $n/4.7$, this is $\Theta(n^2)$. To find a bisector which does not contain an edge of the order-$(k+1)$ Voronoi diagram, we can use a brute-force approach and iterate over the edges of the order-$(k+1)$ Voronoi diagram, marking off which bisector each edge is a segment of. The brute-force algorithm requires quadratic time since there are that many edges, and quadratic space to remember which bisectors have been seen.

The fastest algorithm for constructing the order-$k$ Voronoi diagram in the plane, due to Chan et al.~\cite{DBLP:conf/soda/ChanCZ24}, runs in $O(n\log n+nk)$ expected time. This algorithm can also be used to construct the order-$(n-k-1)$ Voronoi diagram in $O(n\log n+nk)$ expected time.

\begin{algorithm}
\caption{Deciding if there exists a pair $\{p,q\}$ such that $\tilde{C}_S(p,q)\ge k$}
\label{alg:voronoi}
\begin{algorithmic}[1]
\State Compute the order-$(k+1)$ and order-$(n-k-1)$ Voronoi diagrams of $S$
\For{each edge $e$ of $V_{k+1}(S) \cup V_{n-k-1}(S)$}
    \State let $\{p,q\}$ be the pair of points such that $e$ is a segment of $b(p,q)$
    \State mark $\{p,q\}$
\EndFor
\If{there exists an unmarked pair $\{p,q\}$}
    \State \textbf{return} $\{p,q\}$
\Else
    \State \textbf{return} ``no pair''
\EndIf
\end{algorithmic}
\end{algorithm}

\begin{theorem}
Let $S$ be a set of $n$ points in the plane. In $O(n\log n + nk)$ expected time, we can determine if there exists a pair $\{p,q\}$ with $C_S(p,q)\ge k$ or $\tilde{C}_S(p,q)\ge k$, and if such a pair exists we can find it.
\end{theorem}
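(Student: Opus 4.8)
The plan is to prove correctness of Algorithm~\ref{alg:voronoi} and then account for its running time, handling the two decision problems in parallel. Throughout I may assume $k\le (n-2)/2$: since every bisector $b(p,q)$ has one unbounded segment of weight $j$ and another of weight $n-2-j$, both $C_S(p,q)=\min_s\omega(s)$ and $\tilde C_S(p,q)$ are at most $(n-2)/2$, so for larger $k$ the correct output is ``no pair.''

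The conceptual core is a weight characterization of the unmarked pairs. By Lee's result~\cite{DBLP:journals/tc/Lee82}, the edges of $V_k(S)$ are exactly the segments of weight $k-1$ and the edges of $V_{n-k}(S)$ are exactly the segments of weight $n-k-1$, so the marking loop flags precisely those pairs whose bisector carries a segment of weight $k-1$ or of weight $n-k-1$. (Deciding $\tilde C_S(p,q)\ge k$ thus corresponds to forbidding weights $k-1$ and $n-k-1$, i.e.\ running Algorithm~\ref{alg:voronoi} with its threshold shifted by one.) Recall from Section~\ref{sec:weightplane} that consecutive segments of $b(p,q)$ differ in weight by exactly $1$, so the weights attained form a contiguous integer interval $[w_{\min},w_{\max}]$ with $w_{\min}=C_S(p,q)$ and $\tilde C_S(p,q)=\min\{w_{\min},\,n-2-w_{\max}\}$. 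By Observation~\ref{obs:weights} the two unbounded segments have weights $\abs{S\cap H(p,q)}$ and $n-2-\abs{S\cap H(p,q)}$, whence $w_{\max}\ge (n-2)/2\ge k-1$ and $w_{\min}\le (n-2)/2\le n-k-1$. The key consequence is that a target value cannot be ``skipped'': since the interval is contiguous and straddles $(n-2)/2$, the value $k-1$ lies in $[w_{\min},w_{\max}]$ iff $w_{\min}\le k-1$, and $n-k-1$ lies in it iff $w_{\max}\ge n-k-1$. Hence $b(p,q)$ avoids weight $k-1$ iff $w_{\min}\ge k$, and avoids weight $n-k-1$ iff $w_{\max}\le n-k-2$.

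Combining these, a pair is left unmarked by both diagrams iff $w_{\min}\ge k$ and $w_{\max}\le n-k-2$, which is exactly the condition $\tilde C_S(p,q)\ge k$; for the problem of deciding $C_S(p,q)\ge k$ one marks using $V_k(S)$ alone, and an unmarked pair is exactly one with $C_S(p,q)=w_{\min}\ge k$. This yields both soundness (any pair the algorithm outputs is a valid witness) and completeness (if every pair is marked, then no pair clears the threshold, so ``no pair'' is correct). I would record this equivalence as a lemma; the only delicate point is the no-skipping step, which is where the contiguity of weights (Observation~\ref{obs:weights}) together with $k\le (n-2)/2$ is essential — without the latter a bisector could in principle jump past the marked weight class and be wrongly left unmarked.

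For the running time, the diagrams $V_k(S)$ and $V_{n-k}(S)$ are built by Chan et al.~\cite{DBLP:conf/soda/ChanCZ24} in $O(n\log n+nk)$ expected time, using the stated fact that the order-$(n-k)$ diagram is obtainable within the same bound; each has $\Theta(nk)$ edges. The marking loop spends $O(1)$ per edge, since an edge of an order-$m$ diagram separates two regions whose defining $m$-sets differ in a single element, so the pair $\{p,q\}$ with the edge on $b(p,q)$ is read off directly. The one implementation point needing care is the final search for an unmarked pair: a full $\binom{n}{2}$ incidence table costs $\Theta(n^2)$ merely to initialize, which would dominate when $k=o(n)$. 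Instead I would keep only the list of marked pairs (of size $O(nk)$), bucket it by one endpoint, and observe that whenever an unmarked pair exists some vertex has fewer than $n-1$ marked partners; scanning that vertex's partner list then finds a free partner in $O(n)$ time, keeping the whole procedure within $O(n\log n+nk)$ expected time. The main obstacle is thus not a single hard estimate but pinning down the correctness equivalence exactly (the index bookkeeping and the no-skipping argument) and realizing the unmarked-pair search without a quadratic table.
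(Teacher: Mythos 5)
Your proposal is correct, and its core is the same as the paper's: you run Algorithm~\ref{alg:voronoi}, use Lee's correspondence between order-$k$ Voronoi edges and fixed-weight bisector segments, and invoke the contiguity of weights (Observation~\ref{obs:weights}) to rule out ``skipping,'' so that unmarked bisectors are exactly the witnesses. But you diverge from the paper in two places, and in both your version is tighter. First, the indexing: the paper marks with $V_{k+1}(S)$ and $V_{n-k-1}(S)$ (weights $k$ and $n-k-2$), which gives soundness (an unmarked pair has $\tilde C_S(p,q)\ge k+1\ge k$) but not exact completeness --- a pair with $\tilde C_S(p,q)=k$ has a segment of weight $k$, gets marked, and could cause a spurious ``no pair'' answer at threshold exactly $k$. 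Your choice of $V_k(S)$ and $V_{n-k}(S)$ (forbidding weights $k-1$ and $n-k-1$) makes the unmarked pairs coincide exactly with $\{\,\{p,q\}:\tilde C_S(p,q)\ge k\,\}$; you also make explicit the hypothesis $k\le (n-2)/2$ that the straddling argument silently needs, and which the paper never states. Second, the running time: the paper stores a $\Theta(n^2)$ incidence table and asserts the cost is dominated by diagram construction, which is only true when $nk=\Omega(n^2)$ (as in its application $k=n/4.7$); for $k=o(n)$ the table initialization and the final scan for an unmarked pair break the claimed $O(n\log n+nk)$ bound. Your bucketed list of $O(nk)$ marked pairs, plus the pigeonhole observation that some vertex has an unmarked partner whenever one exists, repairs this and achieves the stated bound uniformly in $k$ (with the small caveat that bucket entries can repeat when a bisector contributes several edges, so the bucket scan should deduplicate, e.g.\ via timestamps, before comparing against $n-1$). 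In short: same key lemma and same algorithm, but your account is the one that actually proves the theorem as stated for all $k$.
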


\begin{proof}
See Algorithm~\ref{alg:voronoi}. Start by computing the order-$(k+1)$ and order-$(n-k-1)$ Voronoi diagrams. If there exists a bisector $b(p,q)$ such that the weight of every segment is greater than $k$ but less than $n-k-2$, then no segment of $b(p,q)$ will appear as an edge of either $V_{k+1}(S)$ or $V_{n-k-1}(S)$. Conversely, by Observation~\ref{obs:weights} if $b(p,q)$ has a segment of weight less than $k$, then it must contain a segment of weight $k$. Likewise, if $b(p,q)$ has a segment of weight greater than $n-k-2$, then it must contain a segment of weight $n-k-2$. Therefore if some bisector does not contain a segment that is either an edge of $V_{k+1}(S)$ or $V_{n-k-1}(S)$, then all segments of that bisector have weights strictly between $k$ and $n-k-2$.

The algorithm iterates over the edges of $V_{k+1}(S)$ and $V_{n-k-1}(S)$, marking which bisector defines each edge. If every bisector is marked after processing every edge, then no pair $\{p,q\}$ such that $\tilde{C}_S(p,q)\ge k$ exists. Otherwise, any unmarked bisector defines a pair such that $\tilde{C}_S(p,q)\ge k$.

Since there are $\binom{n}{2}$ pairs, we require $\Theta(n^2)$ space to keep track of which bisectors are marked. And since the number of edges of $V_{k+1}(S)$ and $V_{n-k-1}(S)$ is known to be $\Theta(nk)$, the running time of iterating over the edges is dominated by the time it takes to construct the diagrams, which is $\Theta(nk + n\log n)$.

If a pair such that $C_S(p,q)\ge k$ is desired instead, we can simply skip the computation of $V_{n-k-1}(S)$ and only consider the edges of $V_{k+1}(S)$ instead.
\end{proof}

Using Algorithm~\ref{alg:voronoi} along with the fact that there always exists a pair $\{p,q\}$ such that $\tilde{C}_S(p,q)\ge n/4.7$, we have the following result.

\begin{theorem}
Let $S$ be a set of $n$ points in the plane. In $O(n^2)$ expected time, a pair of points $p$ and $q$ can be found such that $C_S(p,q)\ge n/4.7$ or $\tilde{C}_S(p,q)\ge n/4.7$.
\end{theorem}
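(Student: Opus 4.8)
The plan is to invoke the previous theorem as a black box, instantiating its threshold $k$ at the value supplied by the best-known lower bound. First I would fix $k = \ceil{n/4.7}$. The essential input is that this threshold is actually attainable: the bound $\Pi(n) \ge n/4.7$ of Edelsbrunner et al.~\cite{edelsbrunner89} guarantees that some pair has $C_S(p,q) \ge n/4.7$, while the stronger result of Ramos and Via\~na~\cite{ramos09} guarantees that some pair has $\tilde{C}_S(p,q) \ge n/4.7$, that is, a pair with at least $n/4.7$ points both inside and outside every disk through them. Since $C_S(p,q) \ge \tilde{C}_S(p,q)$ always holds, a pair witnessing the $\tilde{C}_S$ bound also witnesses the $C_S$ bound, so a single threshold handles both claims.

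Next I would run Algorithm~\ref{alg:voronoi} with this value of $k$. Because a qualifying pair is guaranteed to exist, the decision procedure cannot return ``no pair''; it must output an unmarked bisector. By the correctness argument established for the previous theorem, such a bisector $b(p,q)$ has every segment of weight strictly between $k$ and $n-k-2$, and hence the returned pair satisfies $\tilde{C}_S(p,q) \ge k \ge n/4.7$ (and a fortiori $C_S(p,q) \ge n/4.7$).

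Finally I would bound the running time. The previous theorem gives an expected running time of $O(n\log n + nk)$. Substituting $k = \Theta(n)$, the term $nk = \Theta(n^2)$ dominates $n\log n$, so the total expected time is $O(n^2)$, as claimed.

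I do not anticipate any genuine obstacle: this is a direct corollary obtained by running the decision algorithm at a threshold that the existence results certify to be feasible. The only points requiring a moment's care are, first, that the guaranteed existence of a good pair is exactly what upgrades the decision procedure into a search procedure that never fails, and second, that rounding $n/4.7$ to an integer $k$ perturbs the guarantee by only an additive constant, which is absorbed into the stated bound.
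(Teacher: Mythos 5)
Your proposal is correct and matches the paper's approach exactly: the paper's proof is simply to run Algorithm~\ref{alg:voronoi} with $k = n/4.7$, relying (as you do) on the existence guarantee of Ramos and Via\~na that a pair with $\tilde{C}_S(p,q) \ge n/4.7$ always exists, so the decision procedure never fails, and the running time $O(n\log n + nk)$ becomes $O(n^2)$ for $k = \Theta(n)$. Your added remarks about rounding and about $C_S(p,q) \ge \tilde{C}_S(p,q)$ are consistent with observations the paper makes elsewhere.
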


\begin{proof}
Run Algorithm~\ref{alg:voronoi} with $k=n/4.7$.
\end{proof}

For a given set $S$, the value of $\tilde{\Pi}(S)$ lies somewhere between $n/4.7$ and $n/2$. We can therefore perform a binary search on $k$ to determine the value of $\tilde{\Pi}(S)$, and to find a pair that achieves $\tilde{C}_S(p,q)=\tilde{\Pi}(S)$, with at most $O(\log n)$ calls to this algorithm.

\begin{theorem}
Let $S$ be a set of $n$ points in the plane. In $O(n^2\log n)$ expected time, a pair of points $p$ and $q$ that maximizes $\tilde{C}_S(p,q)$ can be found.
\end{theorem}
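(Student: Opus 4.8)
The plan is to perform a binary search on the target value $k$, using Algorithm~\ref{alg:voronoi} as a decision oracle at each step. The crucial structural fact is that the predicate ``there exists a pair $\{p,q\}$ with $\tilde{C}_S(p,q)\ge k$'' is monotone in $k$: if some pair satisfies $\tilde{C}_S(p,q)\ge k$, then that same pair satisfies $\tilde{C}_S(p,q)\ge k'$ for every $k'\le k$. Hence the set of $k$ for which the oracle answers ``yes'' is a downward-closed interval of integers, and its largest element is exactly $\tilde{\Pi}(S)$.

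First I would fix the search range. Since $\tilde{C}_S(p,q)$ counts the smaller of the number of points inside and outside a disk, it never exceeds $\lfloor (n-2)/2\rfloor$, so the oracle answers ``no'' for any $k>(n-2)/2$. On the other hand, the lower bound $\tilde{\Pi}(S)\ge n/4.7$ guarantees a ``yes'' answer at $k=\lceil n/4.7\rceil$. Thus it suffices to binary search for the transition point within the integer range $[\,\lceil n/4.7\rceil,\ \lfloor (n-2)/2\rfloor\,]$, an interval of length $\Theta(n)$, which requires $O(\log n)$ calls to the oracle.

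Next I would bound the cost of each call. Throughout the search the tested value $k$ remains $\Theta(n)$, so a single invocation of Algorithm~\ref{alg:voronoi} runs in $O(n\log n + nk)=O(n^2)$ expected time. Multiplying by the $O(\log n)$ calls yields the claimed $O(n^2\log n)$ expected bound. When the search terminates at $k^\ast=\tilde{\Pi}(S)$, the final successful call returns a pair with $\tilde{C}_S(p,q)\ge k^\ast$; by the definition of $\tilde{\Pi}(S)$ as the maximum over all pairs, this pair in fact attains $\tilde{C}_S(p,q)=\tilde{\Pi}(S)$, so it is the desired optimizer.

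This argument is short because the heavy lifting is done by Algorithm~\ref{alg:voronoi} and Observation~\ref{obs:weights}; the only points requiring care are making sure the binary search is driven by the monotone \emph{decision} predicate rather than by the numeric value of $\tilde{C}_S$ (which Algorithm~\ref{alg:voronoi} does not report directly), and confirming that $k$ never drops to $o(n)$ during the search, so that each oracle call genuinely costs $\Theta(n^2)$ rather than something smaller that would only improve the bound, or something I have mis-estimated. I expect the verification of this per-call cost uniformity across the whole search interval to be the main thing to state explicitly.
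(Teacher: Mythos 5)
Your proposal is correct and takes essentially the same route as the paper: the paper also determines $\tilde{\Pi}(S)$ by binary searching $k$ over the range from $n/4.7$ to $n/2$, using Algorithm~\ref{alg:voronoi} as the decision oracle, with $O(\log n)$ calls each costing $O(n\log n + nk) = O(n^2)$ expected time. Your explicit treatment of the monotonicity of the decision predicate and of the per-call cost staying $\Theta(n^2)$ throughout the search merely spells out details the paper leaves implicit.
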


\subsection{An algorithm for points in convex position}
\label{sec:convex}

In this section we give a linear-time algorithm for finding a pair of points with $C_S(p,q)\ge n/3$ when the set $S$ is in convex position. We assume that the set $S$ is given to us with the points ordered along the convex hull of $S$. Our algorithm is based on the proof of Hayward et al.~\cite{hayward89a} that such a pair of points exists. The following lemma is the basis of the proof.

\begin{lemma}\cite[Lemma~1]{hayward89a}
\label{lem:hayward}
Let $S$ be a set of points, and let $p$ and $q$ be points of $S$. Let $D$ be a disk through $p$ and $q$. The segment $pq$ cuts $D$ into two regions. If both of these regions contain at least $k$ points, then $C_S(p,q) \ge k$.
\end{lemma}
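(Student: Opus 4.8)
The plan is to reduce the claim about an arbitrary disk to a monotonicity statement along the pencil of disks through $p$ and $q$. First I would observe that it suffices to prove that every disk \emph{through} $p$ and $q$ encloses at least $k$ points of $S$: given any disk $D'$ merely enclosing $p$ and $q$, one can shrink it continuously while keeping $p$ and $q$ inside until both lie on its boundary, obtaining a disk $D''\subseteq D'$ through $p$ and $q$ with $\abs{S\cap D''}\le\abs{S\cap D'}$. Hence a lower bound of $k$ for disks through $p$ and $q$ already forces $C_S(p,q)\ge k$.

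Next I would parametrize the pencil explicitly. Placing $p$ and $q$ symmetrically about the origin on the $x$-axis, $b(p,q)$ becomes the $y$-axis, and the disk through $p$ and $q$ centred at $(0,t)$ contains $(x_0,y_0)$ in its interior exactly when $x_0^2+y_0^2-1 < 2ty_0$. The sign of $y_0$ — that is, whether the point lies in $H(p,q)$ or in $H(q,p)$ — fixes the direction of the resulting threshold inequality in $t$: each point left of $L(p,q)$ lies in the disk for all $t$ above some threshold, and each point right of $L(p,q)$ lies in the disk for all $t$ below some threshold. This is the same monotonicity that makes adjacent segment weights differ by exactly one in Section~\ref{sec:weightplane}. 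Consequently, as the centre moves along $b(p,q)$, the set of enclosed points on the left of $pq$ grows monotonically in one direction while the set of enclosed points on the right of $pq$ grows monotonically in the other.

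With this in hand the conclusion is immediate. Let $t^\ast$ be the parameter of the given disk $D$, so that its left region $A$ and right region $B$ each contain at least $k$ points of $S$; note that general position keeps every point of $S$ other than $p,q$ strictly off $L(p,q)$, so none of these points coincides with $p$ or $q$. For an arbitrary disk $D'$ through $p$ and $q$ with parameter $t'$: if $t'\ge t^\ast$ then every point of $A$ remains enclosed, giving $\abs{S\cap D'}\ge k$; if $t'\le t^\ast$ then every point of $B$ remains enclosed, again giving $\abs{S\cap D'}\ge k$. Thus every disk through $p$ and $q$ encloses at least $k$ points, and combined with the first step this yields $C_S(p,q)\ge k$.

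The only genuinely delicate point is the monotonicity/containment claim of the second paragraph, namely that one of the two circular segments of $D$ is always contained in $D'$. I would establish it directly from the inequality $x_0^2+y_0^2-1 < 2ty_0$ rather than synthetically: two circles through $p$ and $q$ meet only at $p$ and $q$, so the bookkeeping of which arc lies inside which disk is cleanest to read off from that formula. Everything else is routine.
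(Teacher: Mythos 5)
Your proof is correct and uses essentially the same idea as the paper's argument (which the paper spells out only for the geodesic analogue, citing Hayward et al.\ for the Euclidean case): as the centre of a disk through $p$ and $q$ slides along $b(p,q)$, one of the two regions that $pq$ cuts off from $D$ stays enclosed, so every disk through $p$ and $q$ keeps at least $k$ points. Your coordinate computation and the explicit shrinking reduction from disks enclosing $p$ and $q$ to disks through $p$ and $q$ are just concrete renderings of steps the paper performs synthetically or leaves implicit.
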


Let $p_1, p_2, \dots, p_n$ be the points of $S$ in order along the convex hull. Notice that for $i<j$, the points $p_i$ and $p_j$ split $S$ into two sets: $\{p_{i+1}, \dots, p_{j-1}\}$ and $\{p_1, \dots, p_{i-1}\}\cup\{p_{j+1},\dots, p_n\}$. The sizes of these two sets are $j-i-1$ and $n-(j-i)-1$, respectively. Thus we can compute, given two points $p_i$ and $p_j$, the sizes of $H(p_i, p_j)$ and $H(p_j, p_i)$ in constant time.

Hayward et al.\ proved that, for any set $S$ of points in convex position, there exists a pair of points $p$ and $q$ such that $C_S(p,q)\ge\ceil{n/3}+1$. These two points must be on the boundary of some enclosing disk~\cite{hayward89a}. Our algorithm simply iterates over all the enclosing disks of $S$ through three points. The following lemma gives a connection between these enclosing disks and the farthest-point Voronoi diagram of $S$.

\begin{lemma}
\label{lem:foo}
Let $S$ be a set of points in the plane. Let $D$ be an enclosing disk through three points. The centre of $D$ is a vertex of the farthest-point Voronoi diagram. Conversely, every vertex of the farthest-point Voronoi diagram is the centre of an enclosing disk through three points.
\end{lemma}

\begin{proof}
Let $D$ be an enclosing disk through three points $p$, $q$, and $r$. By sliding the disk away from $p$ slightly, we get a disk that contains all points of $S$ except for $p$. Thus, the centre of $D$ is adjacent to the farthest-point Voronoi region of $p$. Similarly, the centre of $D$ must be adjacent to the farthest-point Voronoi region of $q$ and $r$, and thus lies on a vertex of the farthest-point Voronoi diagram.

Now consider a vertex $x$ of the farthest-point Voronoi diagram. Since this vertex lies at the intersection of three bisectors, say $b(p,q)$, $b(q,r)$, and $b(r,p)$, it is equidistant from all three points. Consider the disk $D$ with centre $x$ through $p$, $q$, and $r$. If some point of $S$ was not contained in this disk, then $x$ would be in the farthest-point Voronoi region of that point. But $x$ is a vertex of the diagram, so $D$ must be an enclosing disk.
\end{proof}

\begin{theorem}
Given a set $S$ of $n$ points in convex position in the plane, a pair of points $p, q \in S$ such that $C_S(p,q)\ge n/3$ can be found in $O(n)$ time.
\end{theorem}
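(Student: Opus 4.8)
The plan is to iterate over the $O(n)$ enclosing disks of $S$ through three points and, for each one, test in constant time whether one of its three defining chords splits $S$ into two parts that each contain at least $\ceil{n/3}$ points; whenever this happens, Lemma~\ref{lem:hayward} certifies that the corresponding pair is a valid answer. First I would construct the farthest-point Voronoi diagram of $S$. Since $S$ is in convex position and given in order along the hull, this can be done in $O(n)$ time using the known linear-time algorithm for the farthest-point Voronoi diagram of points in convex position. By Lemma~\ref{lem:foo}, the vertices of this diagram are exactly the centres of enclosing disks through three points, and there are only $O(n)$ of them.

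For each vertex, corresponding to an enclosing disk through points $p_a$, $p_b$, $p_c$ with $a<b<c$, the three chords $p_ap_b$, $p_bp_c$, and $p_cp_a$ each cut the disk into two regions. Because the disk encloses all of $S$, the points on either side of such a chord are precisely the points on either side of the corresponding line, so their counts equal the halfplane sizes $\abs{H(\cdot,\cdot)\cap S}$, which are computable in constant time from the index differences as described at the start of Section~\ref{sec:convex}. I would test each of the three pairs and, as soon as both sides contain at least $\ceil{n/3}$ points, output that pair; Lemma~\ref{lem:hayward} then guarantees $C_S(p,q)\ge n/3$. Conversely, any pair returned by the algorithm satisfies the hypothesis of Lemma~\ref{lem:hayward}, so no incorrect pair is ever reported.

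It remains to argue that at least one test succeeds. Hayward et al.'s existence result supplies a pair $\{p,q\}$ with $C_S(p,q)\ge\ceil{n/3}+1$ lying on the boundary of some enclosing disk. Two facts complete the argument. First, by the analysis of Section~\ref{sec:weightplane}, the two unbounded segments of $b(p,q)$ have weights $\abs{H(p,q)\cap S}$ and $\abs{H(q,p)\cap S}$, whence $C_S(p,q)\le\min\{\abs{H(p,q)\cap S},\abs{H(q,p)\cap S}\}$; thus both halfplanes contain more than $n/3$ points, so the balanced-split test would succeed on $\{p,q\}$. Second, moving the centre of the enclosing disk through $p$ and $q$ along the portion of $b(p,q)$ that lies in the diagram until the disk meets a third point yields, by Lemma~\ref{lem:foo}, a vertex of the farthest-point Voronoi diagram for which $\{p,q\}$ is one of the three defining pairs. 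Hence $\{p,q\}$ is among the pairs the algorithm tests, and that test succeeds.

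The running time is $O(n)$ to build the diagram plus $O(1)$ work at each of its $O(n)$ vertices, for $O(n)$ in total. I expect the main obstacle to be the correctness argument rather than the timing, specifically confirming that the witnessing pair of the existence theorem is realized as a defining pair of some diagram vertex. This hinges on the claim that the relevant edge of $b(p,q)$ inside the diagram has a finite endpoint, which holds for $n\ge 3$; the remaining small cases can be checked directly.
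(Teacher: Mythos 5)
Your algorithm is exactly the paper's: compute the farthest-point Voronoi diagram in linear time, and at each of its $O(n)$ vertices test the three chords in constant time via hull-index differences, invoking Lemma~\ref{lem:hayward} to certify any pair that passes. Where you diverge is the correctness argument. The paper is terse here and effectively defers to Hayward et al., whose proof already exhibits a balanced chord on some enclosing disk through three points; you instead treat their result as a black box (a pair $\{p,q\}$ with $C_S(p,q)\ge\ceil{n/3}+1$ on the boundary of some enclosing disk) and re-derive what the algorithm needs: (i) since $C_S(p,q)$ is at most the weight of either unbounded segment of $b(p,q)$, both halfplanes $H(p,q)$ and $H(q,p)$ contain more than $n/3$ points, so the balanced-split test succeeds on $\{p,q\}$; and (ii) the centres of enclosing disks through $p$ and $q$ form an edge of the diagram inside $b(p,q)$, and sliding to a finite endpoint of that edge gives, via Lemma~\ref{lem:foo}, a vertex whose defining triple contains $\{p,q\}$. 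Both steps are sound --- for $n\ge3$ points in convex and general position every edge of the farthest-point Voronoi diagram has a finite endpoint, as you note --- so your proof is correct. What your version buys is self-containment: correctness rests only on the stated existence bound plus facts already established in Sections~\ref{sec:weightplane} and~\ref{sec:convex}, rather than on the internal structure of Hayward et al.'s argument; the paper's version is shorter but leaves precisely the gap you fill (why the witnessing pair is guaranteed to appear as a defining pair of some tested vertex) implicit.
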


\begin{proof}
First, we compute the farthest-point Voronoi diagram of $S$. Then we iterate over the vertices of this diagram. Each vertex corresponds to an enclosing disk through three points, say $p_i$, $p_j$, and $p_k$. These three points define three lines, and in constant time we can compute a lower bound on $C_S(p_i,p_j)$, $C_S(p_j,p_k)$, and $C_S(p_k,p_i)$. Lemma~\ref{lem:hayward} tells us that for at least one vertex of the farthest-point Voronoi diagram, we will find a pair of points with $C_S(p,q)\ge\ceil{n/3}+1$.

It takes linear time to compute the farthest-point Voronoi diagram for a set of points in convex position~\cite{DBLP:journals/dcg/AggarwalGSS89}. The farthest-point Voronoi diagram has $O(n)$ vertices so we can also iterate over the vertices in linear time, giving a linear-time algorithm for points in convex position.
\end{proof}

\begin{algorithm}
\label{alg:convex}
\caption{Algorithm for points in convex position}
\begin{algorithmic}[1]
\State Compute the farthest-point Voronoi diagram of $S$
\For{each vertex $v$}
    \State let $p_i$, $p_j$, $p_k$ be the three points defining $v$
    \If{$\min\{\abs{j-i}, n-\abs{j-i}\} - 1 \ge n/3+1$}
        \State \textbf{return} $\{p_i, p_j\}$
    \ElsIf{$\min\{\abs{k-j}, n-\abs{k-j}\} - 1 \ge n/3+1$}
        \State \textbf{return} $\{p_j, p_k\}$
    \ElsIf{$\min\{\abs{i-k}, n-\abs{i-k}\} - 1 \ge n/3+1$}
        \State \textbf{return} $\{p_k, p_i\}$
    \EndIf
\EndFor
\end{algorithmic}
\end{algorithm}

\subsection{An algorithm for diametral disks}

Given two points $p$ and $q$ in the plane, the \emph{diametral disk} $B(p,q)$ is the disk with the segment $pq$ as a diameter. Akiyama et al.~\cite{akiyama96} proved that, in any set $S$ of $n$ points, there always exists a pair $\{p,q\}$ of points of $S$ such that $B(p,q)$ contains at least $n/3$ points of $S$. Their proof leads directly to an algorithm, as they prove that both points of the pair $\{p,q\}$ must lie on the boundary of the smallest enclosing disk of $S$.

\begin{theorem}
Let $S$ be a set of $n$ points in the plane. A pair $\{p,q\}$ of points such that $B(p,q)$ contains at least $n/3$ points of $S$ can be found in $O(n)$ time.
\end{theorem}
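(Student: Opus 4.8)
The plan is to mirror the structure of the convex-position algorithm from Section~\ref{sec:convex}: reduce the search to the boundary of the smallest enclosing disk (SED) of $S$, which by the result of Akiyama et al.\ is guaranteed to contain both points of a good pair. First I would compute the SED of $S$. Using a linear-time algorithm for the minimum enclosing disk (for instance Megiddo's deterministic algorithm), this takes $O(n)$ time and returns the disk together with the two or three points of $S$ lying on its boundary; the general position assumption that no four points are cocircular guarantees there are at most three such points.

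Next I would enumerate the candidate pairs. Since the SED is determined by at most three boundary points, there are at most $\binom{3}{2}=3$ pairs to consider. For each such pair $\{p,q\}$ I would form the diametral disk $B(p,q)$ and count, in a single linear scan over $S$, how many points it contains. With $O(1)$ candidate pairs and $O(n)$ work per pair, the entire procedure runs in $O(n)$ time. The algorithm returns whichever candidate pair maximizes the count (equivalently, the first pair whose count reaches $\ceil{n/3}$).

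Correctness rests on showing that one of these candidate pairs already meets the bound, and this is where the geometric heart of the argument lies. If the SED is determined by two points $p$ and $q$, then the SED is itself the diametral disk $B(p,q)$, which contains all $n\ge n/3$ points, so $\{p,q\}$ trivially works. The interesting case is when the SED is determined by three points $u,v,w$; then its centre lies strictly inside the triangle $uvw$, which forces the triangle to be acute. The main obstacle is to prove that the three diametral disks $B(u,v)$, $B(v,w)$, $B(w,u)$ together cover the entire SED. I would argue this by a case analysis on the location of a point $x$ of the disk, using that the triangle together with the three circular segments cut off by its sides tiles the disk. If $x$ lies inside triangle $uvw$, the three angles $\angle uxv$, $\angle vxw$, $\angle wxu$ sum to $360^\circ$, so one of them is at least $120^\circ>90^\circ$, placing $x$ in the corresponding diametral disk. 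If instead $x$ lies in the circular segment cut off by a side, say $uv$ (the part of the disk on the far side of the chord $uv$ from $w$), then by the inscribed angle theorem and the acuteness of the triangle the angle $\angle uxv$ exceeds $90^\circ$, so $x\in B(u,v)$.

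Once the covering is established, the pigeonhole principle gives a diametral disk containing at least $\ceil{n/3}$ of the $n$ points of $S$, and the defining pair of that disk is one of the $O(1)$ candidates examined by the algorithm. The remaining ingredients---computing the SED, enumerating the constantly many pairs, and counting the contained points---are routine and contribute only the claimed $O(n)$ running time, so I expect the covering claim for the acute determining triangle to be the only substantive step.
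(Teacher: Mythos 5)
Your proposal is correct and follows essentially the same route as the paper: compute the smallest enclosing disk in linear time, return the pair directly in the two-point case, and in the three-point case use the covering $D\subset B(u,v)\cup B(v,w)\cup B(w,u)$ plus pigeonhole, with a linear scan to count. The only difference is that the paper simply cites Akiyama et al.\ for this covering claim, whereas you re-prove it from scratch (via the angle-sum argument inside the triangle and the inscribed-angle argument in the circular segments, both of which are sound).
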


\begin{proof}
Begin by computing the smallest enclosing disk $D$ of $S$. This is known to take linear time~\cite{DBLP:journals/siamcomp/Megiddo83a}. If there are only two points $p$, $q$ on the boundary of $D$, then $B(p,q)=D$ and we can return $\{p,q\}$. Otherwise there are three points $p$, $q$, and $r$ on the boundary of $D$. Akiyama et al.~\cite{akiyama96} prove that $D\subset B(p,q)\cup B(q,r)\cup B(r,p)$, so one of the three diametral disks must contain at least $n/3$ points. We can compute the sizes of the sets $B(p,q)\cap D$, $B(q,r)\cap D$, and $B(r,p)\cap D$ in $O(n)$ time.
\end{proof}

\subsection{Algorithms for bichromatic point sets}

Finally we consider the variant of the problem where the set of points is bichromatic. Let $S$ be a set of $n/2$ red and $n/2$ points in the plane. We want to find a pair $\{p,q\}$ such that $p$ is red, $q$ is blue, and $C_S(p,q)\ge cn$ or $\tilde{C}_S(p,q)\ge cn$ for some constant $c>0$. We can use the algorithm of Section~\ref{sec:weightplane} without modification, and we adapt the proof of Theorem~\ref{thm:alg1} to show that we can choose a bichromatic pair uniformly at random and have a constant probability that $\tilde{C}_S(p,q)\ge cn$.

\begin{theorem}
\label{thm:alg1-bichromatic}
Let $0<\alpha<1$ be a constant, and let $S$ be a set of $n/2$ red points and $n/2$ blue points in the plane.
There are at least $\alpha(\frac{n}{2})^2$ pairs $\{p,q\}$ of points of $S$ such that $p$ is red, $q$ is blue, and $\tilde{C}_S(p,q)\ge (1/2-\sqrt{(2+\alpha)/12})n+O(1)$. Such a pair can be found in $O(n \log n)$ expected time, or with probability $1-1/n$ in $O(n\log^2n)$ time.
\end{theorem}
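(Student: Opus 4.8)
The plan is to mirror the proof of Theorem~\ref{thm:alg1} almost verbatim, changing only the ambient count of candidate pairs and observing that the Ramos--Via\~na bound $S_k$ already controls the bichromatic case. First I would note that the red--blue pairs form a subset of all $\binom{n}{2}$ pairs, and that there are exactly $(n/2)^2$ of them. Call a pair $\{p,q\}$ \emph{bad} if $\tilde C_S(p,q) < k$, which by definition means $b(p,q)$ has a segment of weight less than $k$ or greater than $n-k-2$; the number of such bisectors over all pairs is exactly $S_k$. Since every bad red--blue pair is in particular a bad pair, the number of bad red--blue pairs is at most $S_k$.

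Next I would impose $S_k < (1-\alpha)(n/2)^2$. Under this hypothesis the number of bad bichromatic pairs is strictly less than $(1-\alpha)(n/2)^2$, so among the $(n/2)^2$ red--blue pairs at least $\alpha(n/2)^2$ must satisfy $\tilde C_S(p,q)\ge k$. Plugging in the bound $S_k\le 3(k+1)(n-k-2)$, the inequality $S_k<(1-\alpha)(n/2)^2$ holds whenever $3(k+1)(n-k-2) < (1-\alpha)n^2/4$. This is a quadratic in $k$ of exactly the same form as in Theorem~\ref{thm:alg1}, but with $\binom{n}{2}$ replaced by $(n/2)^2 = n^2/4$; solving it yields a discriminant whose leading term is $3(2+\alpha)n^2$ in place of $3(1+2\alpha)n^2$, and hence a smallest root $k < \big(1/2 - \sqrt{(2+\alpha)/12}\big)n + O(1)$. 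The larger root again exceeds $(n-3)/2$ and so falls outside the admissible range for $\tilde C_S$.

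Finally I would handle the algorithmic claim exactly as in Theorem~\ref{thm:alg1}. To sample a red--blue pair uniformly at random, pick a red point and a blue point independently and uniformly; by the counting above, such a pair satisfies $\tilde C_S(p,q)\ge\big(1/2-\sqrt{(2+\alpha)/12}\big)n+O(1)$ with probability at least $\alpha$. Each trial evaluates $\tilde C_S(p,q)$ in $O(n\log n)$ time by Lemma~\ref{lem:alg1}, the expected number of trials is $\alpha^{-1}=O(1)$, giving $O(n\log n)$ expected time; and taking $\log_{(1-\alpha)^{-1}} n$ independent trials drives the failure probability down to $(1-\alpha)^{\log_{(1-\alpha)^{-1}} n}=1/n$ in $O(n\log^2 n)$ time.

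The only point requiring genuine care---the main (minor) obstacle---is the bookkeeping that lets us reuse $S_k$ unchanged: $S_k$ counts \emph{all} bad bisectors, not just bichromatic ones, so the argument succeeds precisely because the bad red--blue pairs form a subset of the bad pairs, making $S_k$ an overcount we can afford. The shift from $(1+2\alpha)$ to $(2+\alpha)$ in the final constant is then just the arithmetic consequence of dividing by $(n/2)^2$ rather than $\binom{n}{2}$, and the step I would verify most carefully is that substitution into the quadratic, rather than any new idea.
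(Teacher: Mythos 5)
Your proposal is correct and follows essentially the same route as the paper: count the $(n/2)^2$ bichromatic bisectors, bound the bad ones by the (monochromatic) Ramos--Via\~na quantity $S_k$, require $S_k<(1-\alpha)(n/2)^2$, and solve the resulting quadratic, with the sampling and running-time analysis carried over from Theorem~\ref{thm:alg1}. In fact your write-up is more explicit than the paper's proof, which states only the counting step and leaves the substitution into the quadratic and the algorithmic part implicit; your arithmetic (discriminant leading term $3(2+\alpha)n^2$, hence $k<\bigl(1/2-\sqrt{(2+\alpha)/12}\bigr)n+O(1)$) checks out.
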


\begin{proof}
Among the $\binom{n}{2}$ bisectors defined by points of $S$, there are $(n/2)^2$ bisectors defined by two points of different colours. Recall that $S_k$ is the number of bisectors that have at least one segment of weight less than $k$ or more than $n-k-2$.
If $S_k < (1-\alpha)(n/2)^2$, then there are at least $\alpha(n/2)^2$ red-blue bisectors with no segments of weight less than $k$ or greater than $n-k-2$.
\end{proof}

We can also use our quadratic-time algorithm to find the red-blue pair $\{p,q\}$ that maximizes either $C_S(p,q)$ or $\tilde{C}_S(p,q)$. The only modification that needs to be made is to only consider bichromatic edges. The running time does not change since the number of bisectors is still $\Theta(n^2)$.

\begin{theorem}
Let $S$ be a set of $n/2$ red points and $n/2$ blue points in the plane. Given an integer $k$, in $O(n\log n+nk)$ expected time we can determine if there exists a red-blue pair $\{p,q\}$ such that $C_S(p,q)\ge k$ or $\tilde{C}_S(p,q)\ge k$, and if such a pair exists we can find it.
\end{theorem}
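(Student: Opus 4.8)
The plan is to run Algorithm~\ref{alg:voronoi} essentially unchanged, building the higher-order Voronoi diagrams on the \emph{entire} point set $S$ with colours ignored, and to impose the bichromatic constraint only in the marking and selection phases. First I would compute the order-$(k+1)$ Voronoi diagram $V_{k+1}(S)$ and the order-$(n-k-1)$ Voronoi diagram $V_{n-k-1}(S)$ of all $n$ points, in $O(n\log n + nk)$ expected time, exactly as in the monochromatic algorithm. The crucial point is that the weight $\omega(s)$ of a segment counts \emph{all} points of $S$ inside the corresponding disk, with no reference to colour; consequently the identification of the edges of $V_{k+1}(S)$ (resp.\ $V_{n-k-1}(S)$) with the segments of weight $k$ (resp.\ $n-k-2$) holds verbatim, as does Observation~\ref{obs:weights}.

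Next I would iterate over the edges of $V_{k+1}(S)\cup V_{n-k-1}(S)$, and for each edge lying on a bisector $b(p,q)$ I would mark the pair $\{p,q\}$ only when it is bichromatic, maintaining a marking table indexed by the $(n/2)^2$ red--blue pairs. After all edges are processed I would return any unmarked bichromatic pair, or report that none exists. For correctness, if a red--blue pair $\{p,q\}$ is left unmarked then $b(p,q)$ contains no segment of weight $k$ or $n-k-2$; by Observation~\ref{obs:weights} every segment of $b(p,q)$ then has weight strictly between $k$ and $n-k-2$, so $\tilde{C}_S(p,q)\ge k$, and since $C_S(p,q)\ge\tilde{C}_S(p,q)$ the pair also satisfies $C_S(p,q)\ge k$. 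Conversely, every bichromatic pair with a segment of weight below $k$ or above $n-k-2$ contains, by the same observation, a segment of weight exactly $k$ or $n-k-2$, and is therefore marked; hence if no unmarked bichromatic pair remains then no red--blue pair achieves the bound. For the $C_S$-only version I would drop $V_{n-k-1}(S)$ and mark using the edges of $V_{k+1}(S)$ alone, as in the monochromatic case.

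For the running time, the number of red--blue pairs is $(n/2)^2=\Theta(n^2)$, so the marking table still uses $\Theta(n^2)$ space, and the number of edges of the two diagrams is still $\Theta(nk)$. The edge scan takes $O(nk)$ time, and an unmarked bichromatic pair can be located within $O(nk)$ steps as well: at most $\Theta(nk)$ of the $\Theta(n^2)$ red--blue pairs are ever marked, so either an unmarked one is reached early by pigeonhole, or else $nk=\Omega(n^2)$ and a full scan is already within budget. Thus the total cost is dominated by the $O(n\log n + nk)$ cost of building the diagrams.

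The step requiring the most care is recognising that the bichromatic restriction must be applied only to the \emph{selection} of the output pair and not to the construction of the Voronoi diagrams: the weights, and hence $C_S$ and $\tilde{C}_S$, depend on all points of $S$, so building a diagram on a single colour class would measure the wrong quantity. Once this is observed, correctness reduces to the monochromatic argument restricted to the sub-collection of bichromatic pairs, and the complexity is unchanged because both the pair count $\Theta(n^2)$ and the edge count $\Theta(nk)$ retain their orders.
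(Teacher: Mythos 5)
Your proposal is correct and follows essentially the same route as the paper: the paper likewise computes the full order-$(k+1)$ and order-$(n-k-1)$ Voronoi diagrams on all of $S$ (so the running time is unchanged) and imposes the colour constraint only when marking and when returning an unmarked pair. Your pigeonhole remark about locating an unmarked bichromatic pair within the $O(nk)$ budget is a small detail the paper leaves implicit, but it does not change the argument.
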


\begin{proof}
Modify the for loop in Algorithm~\ref{alg:voronoi} to only iterate over edges that are defined by two points of different colour. Also only return an unmarked pair if the two points are of different colour. The full order-$(k+1)$ and order-$(n-k-1)$ Voronoi diagrams are still computed, so the running time does not change.
\end{proof}

\begin{theorem}
Let $S$ be a set of $n/2$ red points and $n/2$ blue points in the plane. In $O(n^2)$ expected time, a pair of points $\{p,q\}$ can be found such that $p$ is red, $q$ is blue, and $C_S(p,q)\ge n/6.8$ (or $\tilde{C}_S(p,q)\ge n/6.8$).
\end{theorem}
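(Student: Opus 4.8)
The plan is to reduce the statement to a pure existence claim and then feed a suitable threshold into the bichromatic decision algorithm of the preceding theorem. Concretely, I would run the bichromatic version of Algorithm~\ref{alg:voronoi} with $k=\lfloor n/6.8\rfloor$. That algorithm runs in $O(n\log n+nk)$ expected time, and since $k=\Theta(n)$ here this is $O(n^2)$, matching the claimed bound. Everything then hinges on certifying that for this $k$ a red--blue pair with $\tilde C_S(p,q)\ge k$ really exists, so that the algorithm returns a pair instead of reporting ``no pair''; the conclusion for $C_S$ comes for free since $C_S(p,q)\ge\tilde C_S(p,q)$, and if only $C_S$ is wanted one may additionally drop the order-$(n-k-1)$ diagram.

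The heart of the argument is thus a colour-sensitive existence bound, which I would get by redoing the counting behind Theorem~\ref{thm:alg1-bichromatic} while tracking colours. Recall that each breakpoint on a bisector $b(p,q)$ is the circumcentre of $\{p,q,x\}$ for some third point $x$, so that every circle through three points of $S$ contributes one breakpoint to each of the three pair-bisectors those points determine. The decisive combinatorial fact is that among any three points coloured with only two colours, at most two of the three pairs are bichromatic: the colour split is either $(3,0)$, yielding none, or $(2,1)$, yielding exactly two. Inserting this into the Ramos--Via\~na counting, each triple is charged to at most two bichromatic bisectors rather than three, which should replace the leading constant $3$ by $2$ and give a bound of the form $N\le 2(k+1)n-2(k+1)(k+2)$, where $N$ denotes the number of bad bichromatic bisectors (those carrying a segment of weight below $k$ or above $n-k-2$).

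Given such a bound the existence step is a one-line computation. There are exactly $(n/2)^2$ bichromatic bisectors, so whenever $N<(n/2)^2$ at least one of them has all of its segments of weight in $[k,n-k-2]$, i.e.\ defines a red--blue pair with $\tilde C_S(p,q)\ge k$. Writing $k=cn$ and retaining only the leading terms, $2(k+1)n-2(k+1)(k+2)<(n/2)^2$ reduces to $8c^2-8c+1>0$, whose smaller root is $c=(2-\sqrt{2})/4=1/2-1/\sqrt{8}\approx 1/6.8$ (the larger root lies outside the admissible range for $\tilde C_S$). Hence for $k=n/6.8+O(1)$ a good bichromatic pair is guaranteed, which is precisely what the decision algorithm needs.

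The main obstacle I anticipate is making the $3\to2$ reduction rigorous inside the Ramos--Via\~na framework rather than merely on its surface. Their bound is not literally a sum over triples but is extracted from the edge complexity of the low-order Voronoi diagrams, so I would need to verify that the charging can be localised to circumcircles (equivalently, to breakpoints) in a way that lets the ``at most two bichromatic pairs per triple'' observation bite, and that the lower-order terms do not disturb the threshold. Should the counting refuse to decompose so cleanly, the safe fallback is the self-contained estimate $N\le S_k<(n/2)^2$, but this only recovers the weaker constant $1/2-1/\sqrt{6}\approx1/10.9$ already implied by Theorem~\ref{thm:alg1-bichromatic}; closing the gap up to $1/6.8$ is exactly where the colour-aware counting must do the real work.
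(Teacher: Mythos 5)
Your algorithmic half coincides with the paper's (implicit) proof: the paper states this theorem without any proof, and---mirroring the monochromatic case, whose proof is simply ``Run Algorithm~\ref{alg:voronoi} with $k=n/4.7$''---the intended argument is to run the bichromatic modification of Algorithm~\ref{alg:voronoi} with $k=n/6.8$, in $O(n\log n+nk)=O(n^2)$ expected time, delegating the \emph{existence} of a good red--blue pair to the bichromatic bound of~\cite{claverol21}. You are right that existence is the real content here: the paper's own Theorem~\ref{thm:alg1-bichromatic} only yields $(1/2-1/\sqrt{6})n\approx n/10.9$ as $\alpha\to 0$, so the constant $n/6.8$ cannot come from it. Your colour-aware recount is the correct missing ingredient, and in fact it is essentially how~\cite{claverol21} obtains this constant: a triple spans at most two bichromatic pairs, the factor $3$ in the counting becomes a $2$, and the threshold moves from the smaller root of $12c^2-12c+1$ to the smaller root of $8c^2-8c+1$, namely $c=1/2-1/\sqrt{8}\approx 1/6.83$ (which the paper rounds to $1/6.8$, the same generous convention as writing $n/4.7$ for $(1/2-1/\sqrt{12})n$).

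The obstacle you flag is genuine but closable, and the subtlety is worth recording because the most naive localisation fails: the number of triples whose circumcircle encloses \emph{at most} $k$ points is $\Theta(nk^2)$, so charging bad bisectors to arbitrary light circumcircles loses a factor of $k$ even with the $3$-versus-$2$ refinement. The correct charging is at the \emph{exact} level, and each bad bisector must be charged \emph{twice}. Concretely, if $b(p,q)$ has a segment of weight below $k$, every maximal run of segments of weight at most $k-1$ is delimited by breakpoints where the weight steps from $k-1$ to $k$; such a breakpoint is the circumcentre of a triple $\{p,q,x\}$ enclosing exactly $k-1$ points (a vertex of $V_{k}(S)$). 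Symmetrically, runs of weight above $n-k-2$ are delimited by circumcentres of triples with exactly $k-1$ points \emph{outside} (vertices of $V_{n-k-1}(S)$). Every bad bisector collects at least two such charges: if a run reaches an unbounded end of the bisector, that end has weight at most $k-1$, which forces the opposite unbounded end to have weight at least $n-k-1$ and to supply the second charge. Each triple lies on exactly three bisectors, so it charges at most three of them, and at most two bichromatic ones. Finally, the number of triples enclosing exactly $k-1$ points or excluding exactly $k-1$ points is exactly $2k(n-k-1)$ (the classical count of $(k-1)$-facets of a point set in convex position in $\R^3$, applied to the lifted points; this identity is what underlies Lee's formulas and the Ramos--Via\~na bound). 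Putting these together gives $S_k\le\tfrac{3}{2}\cdot 2k(n-k-1)$, recovering Ramos--Via\~na up to lower-order indexing, and at most $\tfrac{2}{2}\cdot 2k(n-k-1)=2k(n-k-1)$ bad bichromatic bisectors, which is precisely your conjectured bound; requiring $2k(n-k-1)<(n/2)^2$ then admits $k$ up to $\bigl(1/2-1/\sqrt{8}\bigr)n+O(1)$, and the theorem follows. So your proposal is sound, and once this charging is written out it is actually more self-contained than the paper, which leans on the citation for this step.
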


\begin{theorem}
Let $S$ be a set of $n/2$ red points and $n/2$ blue points in the plane. In $O(n^2\log n)$ expected time, a red-blue pair $\{p,q\}$ that maximizes either $C_S(p,q)$ or $\tilde{C}_S(p,q)$ can be found.
\end{theorem}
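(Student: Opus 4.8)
The plan is to mirror the monochromatic maximization argument exactly, treating the bichromatic decision algorithm of the previous theorem as an oracle and performing a binary search on the threshold $k$. The key structural fact is that the predicate ``there exists a red-blue pair $\{p,q\}$ with $\tilde{C}_S(p,q)\ge k$'' is monotone in $k$: if such a pair exists for a given threshold, the same pair witnesses the predicate for every smaller threshold. Hence the largest $k$ for which the predicate holds equals $\max\{\tilde{C}_S(p,q) : p \text{ red}, q \text{ blue}\}$, and the oracle, invoked at that threshold, returns a pair achieving it. Keeping track of the witness returned at the highest successful threshold during the search yields the desired pair.

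For the range of the search, note that for any pair the two unbounded segments of $b(p,q)$ have weights $j$ and $n-j-2$, so $\tilde{C}_S(p,q)=\min\{j,n-j-2\}\le (n-2)/2$; thus every value lies in the integer interval $[0,\lfloor(n-2)/2\rfloor]$. The earlier bichromatic existence bound guarantees the maximum over red-blue pairs is at least $n/6.8$, so it suffices to binary search $k$ over the $O(n)$ integers in $[n/6.8,\,n/2]$. This requires $O(\log n)$ calls to the decision algorithm.

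Each call runs the bichromatic version of Algorithm~\ref{alg:voronoi}, which the previous theorem establishes takes $O(n\log n + nk)$ expected time. Since the search only considers $k=\Theta(n)$, the term $nk=\Theta(n^2)$ dominates, so each call costs $O(n^2)$ expected time and the whole search costs $O(n^2\log n)$ expected time. The identical argument, using the $C_S$ variant of the decision oracle (which consults only $V_{k+1}(S)$), finds the red-blue pair maximizing $C_S(p,q)$.

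I do not expect a genuine obstacle here, as the reduction is routine once the decision oracle is in hand. The only points that require care are confirming the monotonicity of the predicate and verifying that restricting attention to bichromatic pairs does not disturb the $O(n\log n + nk)$ running time of the oracle, both of which follow immediately from the statement of the previous theorem.
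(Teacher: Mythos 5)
Your proposal is correct and matches the paper's (implicit) proof exactly: a binary search over the threshold $k$ using the bichromatic version of Algorithm~\ref{alg:voronoi} as the decision oracle, with $O(\log n)$ calls at $O(n\log n + nk) = O(n^2)$ expected time each. One minor slip worth fixing: $\tilde{C}_S(p,q)$ is the minimum of $\min\{\omega(s), n-\omega(s)-2\}$ over \emph{all} segments of $b(p,q)$, so you should write $\tilde{C}_S(p,q)\le\min\{j,n-j-2\}$ rather than equality (interior segments can have smaller weight than both unbounded ones); the upper bound of $(n-2)/2$ that you actually use is unaffected.
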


\section{Algorithms in a simple polygon}

In this section, we consider the problem of finding a pair $\{p,q\}$ such that $C_S(p,q)\ge cn$ in the setting of a simple polygon. For this section, let $S$ be a set of $n$ points inside a simple polygon $P$ with $m$ vertices, of which $r$ are reflex and $m-r$ are convex.

Our algorithms compute shortest paths and order-$k$ geodesic Voronoi diagrams inside of a simple polygon. Given a simple polygon $P$ with $m$ vertices and $r$ reflex vertices, Aichholzer et al.~\cite{DBLP:journals/ijcga/AichholzerHKPV14} show that it is possible to compute a simple polygon $P'$ that contains $P$, has the same reflex vertices as $P$, and has a total of $O(r)$ vertices. The problems of computing shortest paths and Voronoi diagrams are equivalent in $P$ and $P'$, in the sense that the shortest path between $p$ and $q$ in $P$ is equal to the shortest path between $p$ and $q$ in $P'$. Because of this, we assume that $P$ is preprocessed using this algorithm, which takes $O(m)$ time. We also assume that the shortest-path data structure of Guibas and Hershberger~\cite{DBLP:journals/jcss/GuibasH89} has been computed for $P'$. This data structure can tell us, in $O(\log r)$ time, the length of the shortest path between any two points in $P'$, and it can additionally report the vertices of the shortest path in time proportional to the number of vertices.

The following lemma due to Pollack, Sharir, and Rote~\cite{DBLP:journals/dcg/PollackSR89} will be useful in our analysis.

\begin{lemma}\cite[Lemma~1]{DBLP:journals/dcg/PollackSR89}
\label{lem:pollack}
Let $ \triangle abc $ be a geodesic triangle in $ P $. As $x$ varies along $g(b,c)$, $ \lvert g(a,x)\rvert $ is a convex function of $\lvert g(b,x)\rvert$, and $ \lvert g(a,x)\rvert < \max\{\lvert g(a,b)\rvert, \lvert g(a,c)\rvert\}$.
\end{lemma}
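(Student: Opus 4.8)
The plan is to parameterize $g(b,c)$ by arc length. Since $x$ lies on $g(b,c)$, the quantity $\lvert g(b,x)\rvert$ is exactly the arc-length distance of $x$ from $b$ along this path, so writing $s = \lvert g(b,x)\rvert \in [0,\ell]$ with $\ell = \lvert g(b,c)\rvert$ and $x(s)$ for the corresponding point, I want to show that $f(s) := \lvert g(a,x(s))\rvert$ is convex and that $f(s) < \max\{f(0), f(\ell)\}$ for $s$ in the relative interior. The second claim is a consequence of the first: a convex function on $[0,\ell]$ attains its maximum at an endpoint, and $f(0) = \lvert g(a,b)\rvert$, $f(\ell) = \lvert g(a,c)\rvert$. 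Conceptually, the convexity is an instance of the fact that a simple polygon under the geodesic metric is a space of nonpositive curvature, in which the distance to a fixed point is convex along geodesics; but I would give a self-contained argument via the funnel from $a$ over $g(b,c)$.

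The first step is a piecewise analysis. The shortest-path structure from $a$ subdivides $[0,\ell]$ into finitely many intervals on each of which $g(a,x(s))$ has a fixed combinatorial type: it consists of a fixed initial portion $g(a,w)$ followed by a straight segment $w\,x(s)$, where $w$ (the apex of the corresponding subfunnel) is a fixed reflex vertex of $P$ or $a$ itself, and $x(s)$ moves along a single straight edge of $g(b,c)$. On such an interval $f(s) = \lvert g(a,w)\rvert + \lvert w - x(s)\rvert$, and since $x(s)$ traverses a line at unit speed, $\lvert w - x(s)\rvert$ is a convex function of $s$ — strictly convex unless the edge is radial through $w$, in which case it is affine of slope $\pm 1$. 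Differentiating, $f'(s) = \cos\theta(s)$, where $\theta(s)$ is the angle at $x(s)$ between the forward tangent direction of $g(b,c)$ and the direction $w \to x(s)$ of the last edge of the shortest path. Convexity of $f$ is therefore equivalent to $f'$ being nondecreasing, i.e.\ to $\theta(s)$ being nonincreasing.

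It remains to control $f'$ across the breakpoints, which is the heart of the argument. There are two types. At a breakpoint where only the apex changes (from $w$ to $w'$) while $x$ stays on one edge of $g(b,c)$, the points $w$, $w'$, and $x$ are collinear at the transition (this is precisely the instant the taut path gains or drops the vertex $w'$), so the direction of the last edge varies continuously and $f$ is differentiable there — no jump. The only jumps occur at the second type of breakpoint, where $x$ crosses a reflex vertex $v$ of $g(b,c)$ and the forward tangent rotates by the turn angle. Here I must show the jump in $f'$ is nonnegative, equivalently that $\theta$ does not increase. This is where the geometry of the geodesic triangle is essential: the reflex vertices of the side $g(b,c)$ bulge into the exterior of $\triangle abc$, whereas $a$, and hence every apex $w$, lies on the interior side of $g(b,c)$; consequently the tangent turns toward the apex side, which is exactly the rotation that does not increase the angle to the ray $w \to x$. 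I expect this turn-direction bookkeeping — formalized through the angular monotonicity of the funnel, namely that the direction of the last path edge rotates monotonically as $x$ sweeps from $b$ to $c$ — to be the main obstacle, since one must rule out the sign going the wrong way and treat the case $w = v$ carefully.

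Combining the pieces, $f'$ is nondecreasing, so $f$ is convex, giving $f(s) \le \max\{\lvert g(a,b)\rvert, \lvert g(a,c)\rvert\}$. For the strict inequality at interior points, note that if equality held at some interior $s_0$ then the convex function $f$ would be constant and equal to this maximum on all of $[0,\ell]$; but every piece is either strictly convex or affine of slope $\pm 1$, so $f$ cannot be constant on any subinterval, a contradiction. Hence $\lvert g(a,x)\rvert < \max\{\lvert g(a,b)\rvert, \lvert g(a,c)\rvert\}$ for every $x$ in the relative interior of $g(b,c)$, as claimed.
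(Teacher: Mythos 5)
The paper never proves this lemma: it is imported verbatim from Pollack, Sharir, and Rote~\cite{DBLP:journals/dcg/PollackSR89} and used as a black box, so there is no internal proof to compare yours against; your proposal must be judged on its own. Its architecture is sound and is the natural one: parameterizing by $s=\lvert g(b,x)\rvert$ is legitimate because subpaths of shortest paths are shortest; the funnel decomposition gives $f(s)=\lvert g(a,w)\rvert+\lvert w-x(s)\rvert$ with $f'(s)=\cos\theta(s)$ on each piece; the apex-change breakpoints are correctly dispatched by the collinearity of $w$, $w'$, $x$; and deducing the strict inequality from convexity plus the fact that no piece of $f$ can be constant (each piece being strictly convex or affine of slope $\pm1$) is a clean way to finish.

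The genuine gap is at the step you yourself flag as the heart of the argument, and the one sentence you offer for it has the geometry backwards. At a vertex $v$ of $g(b,c)$ the exterior wedge of $P$ at $v$ sits inside the concave side of the bend (the elbow of the turn, by tautness of $g(b,c)$); since the exterior of $P$ is connected and unbounded, it cannot be enclosed by the closed curve $g(a,b)\cup g(b,c)\cup g(c,a)$, so the interior of $\triangle abc$ lies locally on the \emph{convex} side of the bend. Consequently the forward tangent turns \emph{toward} the exterior wedge and \emph{away} from the side containing $a$ and the apex $w$ --- the opposite of your claim that ``the tangent turns toward the apex side.'' Your conclusion (that $\theta$ does not increase) is nevertheless true, but for a different reason: the ray $w\to x$ crosses $g(b,c)$ and points \emph{into} the concave side, so rotating the tangent toward that side rotates it toward this ray. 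Worse, ``$w$ lies on the interior side'' is genuinely too weak to close the argument: if the direction from $v$ to $w$ pointed into the interior side but strictly between the bisector of the turn and the outgoing edge of $g(b,c)$, then $\theta$ would strictly \emph{increase} across $v$, so some further geometric input is required to exclude exactly this position. The missing argument is an obstacle-blocking one: for $x$ just before and just after $v$ the last path edge is the segment $wx$, so these segments sweep two triangles $w\,v\,x^-$ and $w\,v\,x^+$ that lie in $P$ and hence avoid the exterior wedge at $v$; whenever the direction from $v$ to $w$ lies in the forbidden open half-plane bounded by the turn's bisector (the half-plane on the concave side), the angular sectors of these two triangles at $v$ together cover the entire concave sector and therefore contain the exterior wedge, a contradiction. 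This forces the direction from $v$ to $w$ into the closed half-plane on the convex side of the bisector, which is precisely equivalent to $\theta^+\le\theta^-$, i.e.\ to the nonnegative jump of $f'$. (The degenerate cases --- the apex becoming $v$ itself, giving $\theta^+=0$, or the last edge lying along $g(b,c)$, giving $\theta^-=\pi$ --- are trivially fine.) With this argument supplied your proof goes through; without it, the central inequality rests on a justification that, read literally, points the wrong way.
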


\subsection{Computing the minimum weight segment on a bisector in a simple polygon}

There are several challenges to adapting the algorithm for computing $C_S(p,q)$ given $p$ and $q$. Even the first step of computing the centre of the disk through three points is nontrivial in a simple polygon. In fact, there may not even be a geodesic disk with three given points on its boundary~\cite{DBLP:journals/dcg/AronovFW93}. However, Oh and Ahn~\cite{DBLP:journals/dcg/OhA20} proved that if there exists a geodesic disk with three given points on its boundary, then it is unique. They also show how to find the centre of this disk in $O(\log^2r)$ time. Their algorithm assumes that the shortest-path data structure of Guibas and Hershberger~\cite{DBLP:journals/jcss/GuibasH89} has been constructed. Constructing this data structure takes $O(m)$ preprocessing time.

One important thing to note is that, since we are working in a polygon $P'$ that contains~$P$, we must ensure that the centre of this disk is actually inside $P$. Using standard planar point location data structures this takes $O(\log m)$ time per point~\cite{DBLP:books/lib/BergCKO08}, with $O(m)$ preprocessing time. This is the only operation we need that does not give the same answer in $P'$ as in $P$.

Once we have found the centres of the geodesic disks through $p$, $q$, and the rest of the points $x_i$ for $1\le i\le n-2$, we need to sort them along the bisector. Again in the plane this is a simple task, but in a simple polygon it is not obvious how to do this efficiently. However, all of these centres lie on $b(p,q)$. The next lemma shows that we can sort the points along $b(p,q)$ by sorting their distances to $p$ (or $q$).

\begin{lemma}
Let $p$ and $q$ be two points in a simple polygon $P$. Let $a$ and $b$ be two points on $b(p,q)$ such that $a$ is between $b$ and the midpoint of $g(p,q)$ (the point where $g(p,q)$ and $b(p,q)$ intersect). Then $ \lvert g(p,a)\rvert < \lvert g(p,b)\rvert$.
\end{lemma}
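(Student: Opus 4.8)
The plan is to work with the function $r(x)=\lvert g(p,x)\rvert=\lvert g(q,x)\rvert$, which is well defined on $b(p,q)$ since every point of the bisector is equidistant from $p$ and $q$; the goal is exactly to show $r(a)<r(b)$. First I would record that the midpoint $m$ minimizes $r$ over the whole bisector: for any $x\in b(p,q)$ we have $2r(x)=\lvert g(p,x)\rvert+\lvert g(x,q)\rvert\ge\lvert g(p,q)\rvert=2r(m)$ by the triangle inequality for shortest paths, so $r(m)\le r(b)$. This innocuous fact is what lets me replace $\max\{r(m),r(b)\}$ by $r(b)$ at the end, so it is worth isolating up front.

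The heart of the argument is to locate $a$ inside a geodesic triangle and then apply Lemma~\ref{lem:pollack}. I would consider the two geodesic triangles $\triangle pmb$ and $\triangle qmb$, which share the common edge $g(m,b)$. The first step here is to show that the interior of each edge $g(p,m)$ and $g(p,b)$ lies strictly on $p$'s side of the bisector: if $y$ is interior to $g(p,m)$ then $\lvert g(p,y)\rvert+\lvert g(y,m)\rvert=r(m)$, while $\lvert g(q,y)\rvert\ge\lvert g(q,m)\rvert-\lvert g(y,m)\rvert=\lvert g(p,y)\rvert$, with equality only in the degenerate situation excluded by general position. Consequently the bisector sub-arc from $m$ to $b$ can meet $g(p,m)$ and $g(p,b)$ only at the shared endpoints $m$ and $b$, and symmetrically for the $q$-edges. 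Since $a$ lies on this sub-arc, it is therefore separated from the site-edges, and it falls into exactly one of three cases according to the chord $g(m,b)$: $a$ is strictly on $p$'s side of $g(m,b)$, in which case $a\in\triangle pmb$; strictly on $q$'s side, in which case $a\in\triangle qmb$; or on $g(m,b)$ itself.

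In the generic case, say $a\in\operatorname{int}\triangle pmb$, I would invoke the standard property that every interior point of a geodesic triangle lies on the shortest path from a vertex to a point of the opposite edge: there is $w\in g(m,b)$ with $a\in g(p,w)$. Then $r(a)=\lvert g(p,a)\rvert<\lvert g(p,w)\rvert$, and Lemma~\ref{lem:pollack} applied to $\triangle pmb$ with $w$ on $g(m,b)$ gives $\lvert g(p,w)\rvert<\max\{\lvert g(p,m)\rvert,\lvert g(p,b)\rvert\}=\max\{r(m),r(b)\}=r(b)$. The case $a\in\triangle qmb$ is identical with the roles of $p$ and $q$ interchanged, using $r(a)=\lvert g(q,a)\rvert$, and the degenerate case $a\in g(m,b)$ follows directly from Lemma~\ref{lem:pollack}. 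In every case $r(a)<r(b)$, as required. The symmetry between the two triangles is what makes the argument robust: whichever way the bisector arc bulges relative to its chord $g(m,b)$, one of the two triangles contains $a$.

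I expect the main obstacle to be the containment claim, i.e.\ verifying cleanly that $a$ lies inside $\triangle pmb$ or $\triangle qmb$ rather than merely on the correct side of the chord. This rests on the fact that the bisector arc cannot cross the site-edges (handled by the triangle-inequality computation above, together with a Jordan-curve argument that a point on $p$'s side of $g(m,b)$ bounded away from $g(p,m)$ and $g(p,b)$ is genuinely interior to $\triangle pmb$), and on the extension property that an interior point of a geodesic triangle lies on a geodesic from the apex to the opposite edge. Both are geometrically intuitive but need the general-position hypotheses to rule out the bisector touching a site-edge at an interior point or a geodesic running along an edge; a naive ``extend $g(p,a)$ past $a$'' argument fails because a prolonged geodesic need not remain a geodesic, which is precisely why I route the argument through the interior-point/opposite-edge property instead.
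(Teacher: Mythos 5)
Your proof is correct and takes essentially the same route as the paper's: reduce, by the symmetry between $p$ and $q$, to the case where $a$ lies in the geodesic triangle $\triangle pmb$, apply Lemma~\ref{lem:pollack}, and finish by comparing $\lvert g(p,m)\rvert$ with $\lvert g(p,b)\rvert$. The only differences are ones of detail: you explicitly justify the two steps the paper leaves implicit (that $a$ indeed lies in $\triangle pmb\cup\triangle qmb$, and that the edge-version of Lemma~\ref{lem:pollack} extends to interior points via the funnel property), and you use the non-strict bound $\lvert g(p,m)\rvert\le\lvert g(p,b)\rvert$ from the triangle inequality where the paper derives the strict inequality from uniqueness of shortest paths.
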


\begin{proof}
Consider the disk centred at $p$ with radius $ \lvert g(p,b)\rvert$. Our goal is to show that $a$ is inside this disk. We can, without loss of generality, assume that $a$ is inside $\triangle pmb$, where $ m $ is the midpoint of $g(p,q)$. Otherwise, consider $q$ instead of $p$.

We know that $ \lvert g(p,a)\rvert < \max\{\lvert g(p,m)\rvert, \lvert g(p,b)\rvert\}$ by Lemma~\ref{lem:pollack}. We also know that $ \lvert g(p,m)\rvert < \lvert g(p,b)\rvert$. Suppose otherwise, then the concatenation of $g(p,b)$ and $g(b,q)$ would be a shortest path from $p$ to $q$ but shortest paths are unique. Therefore $ \lvert g(p,a)\rvert < \lvert g(p,b) \rvert$.
\end{proof}

The only other piece of information needed is which side of the extension path $\ell(p,q)$ a given centre $o$ is on. The data structure of Guibas and Hershberger can tell us this in $O(\log r)$ time, by examining the paths $g(p,q)$ and $g(p,o)$ and looking at the first vertex where they diverge. So we can split the centres into two sets, those on one side of $g(p,q)$ and those on the other, and then sort each set by their distance to $p$.

\begin{lemma}
\label{lem:alg1geo}
Let $S$ be a set of $n$ points in a simple polygon $P$, and let $p$ and $q$ be points of $S$. The value of $C_S(p,q)$ can be computed in $O(m+n(\log^2r+\log m+\log n))$ time.
\end{lemma}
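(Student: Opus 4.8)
The plan is to adapt Algorithm~\ref{alg:one} to the geodesic setting, replacing each planar primitive with its geodesic counterpart and invoking the data structures and sub-lemmas assembled above. First I would preprocess $P$ in $O(m)$ time: compute the polygon $P'$ of Aichholzer et al.\ with $O(r)$ vertices, build the Guibas--Hershberger shortest-path data structure, and build a planar point-location structure on $P$. Then, for each of the $n-2$ points $x_i \in S\setminus\{p,q\}$, I would compute the centre $c_i$ of the geodesic disk through $p$, $q$, and $x_i$ (unique, if it exists) using the algorithm of Oh and Ahn in $O(\log^2 r)$ time, verify that $c_i$ actually lies inside $P$ in $O(\log m)$ time, and record both $\lvert g(p,c_i)\rvert$ and the side of $\ell(p,q)$ on which $c_i$ lies, each in $O(\log r)$ time. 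I would also record, for each $x_i$, which side of $\ell(p,q)$ the point itself lies on, again in $O(\log r)$ time; this is reused to initialize the sweep.

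Next I would reproduce the sweep along $b(p,q)$. Splitting the centres according to which side of $\ell(p,q)$ they lie on and sorting each group by distance to $p$ (justified by the preceding sorting lemma together with the monotonicity of $\lvert g(p,o)\rvert$ along each branch of the bisector away from the midpoint $m$) orders all centres along $b(p,q)$ in $O(n\log n)$ time. I would initialize the weight of the first boundary segment as the number of points of $S$ on the corresponding side of $\ell(p,q)$, the geodesic analogue of $\lvert S\cap H(p,q)\rvert$, and then walk through the sorted centres, incrementing or decrementing the running weight by one according to the recorded side of the associated $x_i$, exactly as in lines~7--13 of Algorithm~\ref{alg:one}. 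Taking the minimum weight over all segments yields $C_S(p,q)$. The sweep costs $O(n)$, so the total is $O(m) + O(n(\log^2 r + \log m)) + O(n\log n) = O(m + n(\log^2 r + \log m + \log n))$, as claimed.

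The main obstacle is correctness, not timing: I must argue that the clean combinatorial structure of the planar sweep survives. Two facts are needed. First, each $x_i$ contributes at most one crossing of $b(p,q)$, because a point $o\in b(p,q)$ has $x_i$ on the boundary of its disk exactly when $o\in b(p,x_i)$, and any such $o$ is equidistant from $p$, $q$, and $x_i$, hence is the centre of a geodesic disk through all three, of which Oh and Ahn guarantee there is at most one. Second, crossing $c_i$ must flip $x_i$ between inside and outside and therefore change the weight by exactly one; this uses that $\lvert g(o,x_i)\rvert - \lvert g(o,p)\rvert$ vanishes only at $c_i$ and changes sign there, a tangency being excluded by the general position assumption that no four points are geodesically cocircular.

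Finally I would dispatch the degenerate case cleanly: a point $x_i$ for which no geodesic disk through $p$, $q$, $x_i$ exists corresponds to $b(p,q)\cap b(p,x_i)=\emptyset$, so $\lvert g(o,x_i)\rvert - \lvert g(o,p)\rvert$ has constant sign along the entire bisector and $x_i$ is either inside every disk through $p$ and $q$ or outside every such disk. Such a point produces no crossing, is simply omitted from the sorted sequence, and has its contribution absorbed correctly into the initial weight. With these points handled, the sweep computes $\omega(s)$ for every segment $s$ of $b(p,q)$, and hence $C_S(p,q) = \min_s \omega(s)$, within the stated running time.
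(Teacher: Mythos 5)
Your algorithmic skeleton (Oh--Ahn centres, point location to verify the centre lies in $P$, splitting centres by side of $\ell(p,q)$, sorting by distance to $p$, then a $\pm1$ sweep) is the same as the paper's, and your uniqueness-based argument that each $x_i$ produces at most one crossing is a nice explicit justification of a step the paper leaves implicit. However, there is a genuine error, and it sits exactly in the step you present as dispatching the degenerate case: the initialization. In a simple polygon the bisector $b(p,q)$ is truncated by the boundary of $P$, so the weight of an extreme segment of $b(p,q)$ is \emph{not} the number of points of $S$ on the corresponding side of $\ell(p,q)$. A point with no crossing is indeed inside every disk through $p$ and $q$ or outside every such disk, but which case occurs is not determined by its side: a point to the left of $\ell(p,q)$ can be outside every disk (its would-be crossing lies beyond the endpoint of the truncated bisector), and a point to the right can be inside every disk. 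Your rule counts the former in every segment weight and never counts the latter, so these points are not ``absorbed correctly''; the error propagates to every weight and hence to $C_S(p,q)$. The paper's own Figure~\ref{fig:weight-zero-bisector} is a witness: there $b(u,v)$ is a single segment of weight $0$, your sorted sequence of crossings is empty, and your algorithm would return the number of points on one side of $\ell(u,v)$, which is positive, whereas $C_S(u,v)=0$.

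The fix is cheap and stays within the stated bound: compute the weight of one anchor segment honestly instead of by a side count. For example, take the midpoint of $g(p,q)$ (which lies on $b(p,q)$), or the first crossing in sorted order, and for every $x_j$ test whether $\lvert g(o,x_j)\rvert \le \lvert g(o,p)\rvert$ at that centre $o$ using the Guibas--Hershberger structure, in $O(\log r)$ time per point and $O(n\log r)$ total; this count correctly includes the always-inside points and excludes the always-outside ones. Then run your $\pm1$ sweep outward from that anchor in both directions along the bisector, with increment/decrement determined by the side of $x_i$ as before. With this modification your argument is correct and the total running time remains $O(m+n(\log^2 r+\log m+\log n))$, matching the lemma.
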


\begin{proof}
Computing the centre $o_i$ of the geodesic disk through $p$, $q$, and $x_i$, for some $1\le i\le n-2$, takes $O(\log^2r)$ time. We also must check that this centre is in $P$, which takes $O(\log m)$ time. If the centre is in $P$, then we compute which side of $\ell(p,q)$ it is on and $\abs{g(p,o_i)}$ in $O(\log r)$ time. We repeat this for each of the $n-2$ points $x_i$, for $1\le i\le n-2$. Once we have the centres in two sets, we can sort them in $O(n\log n)$ time since we have already computed the distances $\abs{g(p,o_i)}$. Finally, once the points are sorted, we can proceed as in Section~\ref{sec:weightplane}. The information needed to know whether to increment or decrement $\omega(s)$ when moving from one segment to the next has already been computed: if $x_i$ is to the left or right of $g(p,q)$. So the remaining work takes $O(n)$ time. Add to that the $O(m)$ preprocessing time for constructing the data structures, and we get a grand total of $O(m+n(\log^2r+\log m+\log n))$.
\end{proof}

We now show that there are many pairs of points with $C_S(p,q)\ge cn$, for $c\approx 1/11.1$. Unfortunately the proof from the Euclidean setting does not translate to the geodesic setting, because Ramos and Via\~na used properties of the transformation that maps a point $(x,y)$ in $\R^2$ to the point $(x,y,x^2+y^2)$ in $\R^3$ in their proof. There does not seem to be an easy way to generalize this mapping to the geodesic setting. However we can instead look to the proofs of Bose et al.~\cite{wadspaper}. They also bound the number of bisectors that have segments of low weight, but in a different way. We note that, in a simple polygon, the lower bounds on $\Pi(S)$ and $\tilde\Pi(S)$ are different.

Consider the union of all segments of weight $i$, for $0\le i\le k$, for some fixed integer $k\leq n-2$. A given bisector $b(p,q)$ may have many segments of weight at most $k$. Two adjacent segments of weight at most $k$ are part of the same \emph{connected component}. Let $\lambda_k$ be the total number of connected components over all bisectors. Bose et al.~\cite{wadspaper} proved that $\lambda_k\le3kn-\frac{5}{2}k^2-\frac{5}{2}k$. Notice that if $\lambda_k < \binom{n}{2}$, then there must be at least one bisector that has no connected component. In other words, there is some bisector where no segment has weight $k$ or smaller. That implies the existence of a pair with $C_S(p,q)\ge k$.

\begin{theorem}
\label{thm:geodesicrandom}
Let $0<\alpha<1$ be a constant, and let $S$ be a set of $n$ points in a simple polygon $P$ with $m$ vertices, $r$ of which are reflex. With probability $\alpha$, a pair of points $\{p, q\} \in S$ chosen uniformly at random will have $C_S(p,q)\ge (3-\sqrt{4+5\alpha})n/5$.
\end{theorem}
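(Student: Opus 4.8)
The plan is to mirror the proof of Theorem~\ref{thm:alg1} almost verbatim, substituting the Bose et al.\ bound $\lambda_k \le 3kn - \tfrac{5}{2}k^2 - \tfrac{5}{2}k$ for the Ramos--Via\~na bound on $S_k$. First I would set up the counting. Fix the integer $k$ and call a bisector \emph{bad} if some segment of it has weight at most $k$, and \emph{good} otherwise. Each bad bisector carries at least one connected component of weight-$\le k$ segments, and distinct bisectors carry disjoint components, so the number of bad bisectors is at most $\lambda_k$. Hence the number of good bisectors is at least $\binom{n}{2} - \lambda_k$, and every good bisector $b(p,q)$ satisfies $C_S(p,q) \ge k$, since all its segments have weight exceeding $k$.

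Second, to guarantee at least $\alpha\binom{n}{2}$ good pairs it suffices that $\lambda_k \le (1-\alpha)\binom{n}{2}$, which by the Bose et al.\ bound is implied by $3kn - \tfrac{5}{2}k^2 - \tfrac{5}{2}k \le (1-\alpha)\binom{n}{2}$. The left-hand side is a downward parabola in $k$ vanishing at $k=0$, so this inequality holds for every $k$ below the smaller root of the corresponding equation. Writing that equation as $\tfrac{5}{2}k^2 - (3n-\tfrac{5}{2})k + (1-\alpha)\binom{n}{2} = 0$ and keeping leading-order terms, the discriminant is $\tfrac{16+20\alpha}{25}n^2 + O(n)$, whose square root is $\tfrac{2}{5}\sqrt{4+5\alpha}\,n + O(1)$; the smaller root is therefore $(3-\sqrt{4+5\alpha})n/5 + O(1)$. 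Taking $k = (3-\sqrt{4+5\alpha})n/5$ thus keeps us below the threshold up to lower-order terms, so at least $\alpha\binom{n}{2}$ of the $\binom{n}{2}$ pairs are good. A pair drawn uniformly at random is good with probability at least $\alpha$, and every good pair has $C_S(p,q) \ge k$, which is the claim.

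The genuinely new ingredient relative to the Euclidean case is the observation that $\lambda_k$ counts \emph{connected components} rather than bisectors; I would stress that since each bad bisector contributes at least one component, the inequality ``(number of bad bisectors) $\le \lambda_k$'' points in the favourable direction and loses nothing. Beyond that, the work is the routine-but-error-prone quadratic algebra: I would check that the correct (smaller) root is selected, confirm that the larger root lies above $n/2$ and so outside the feasible range for $C_S(p,q)$ exactly as in Theorem~\ref{thm:alg1}, and track the $O(1)$ terms carefully so that the clean bound $(3-\sqrt{4+5\alpha})n/5$ is justified. As a sanity check I would verify the limit $\alpha \to 0$, which yields $(3-2)n/5 = n/5$, matching the geodesic lower bound recorded in the introduction. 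I expect the only real care is bookkeeping rather than any conceptual obstacle, since the combinatorial heavy lifting is already supplied by the bound on $\lambda_k$.
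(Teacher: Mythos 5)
Your proof is correct and follows essentially the same route as the paper: bound the number of bad bisectors by $\lambda_k$ via the Bose et al.\ bound $\lambda_k \le 3kn - \tfrac{5}{2}k^2 - \tfrac{5}{2}k$, require $\lambda_k \le (1-\alpha)\binom{n}{2}$, and solve the quadratic to obtain $k = (3-\sqrt{4+5\alpha})n/5 + O(1)$. If anything, your bookkeeping is slightly cleaner than the paper's: its prose phrases the condition as $\lambda_k < \alpha\binom{n}{2}$ yielding $(1-\alpha)\binom{n}{2}$ good bisectors (an $\alpha$ versus $1-\alpha$ slip), whereas its algebra --- like yours --- uses the threshold $(1-\alpha)\binom{n}{2}$ that the probability-$\alpha$ conclusion actually requires.
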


\begin{proof}
The proof is similar to the proof of Theorem~\ref{thm:alg1}. If $\lambda_k<\alpha\binom{n}{2}$ then there are at least $(1-\alpha)\binom{n}{2}$ bisectors that do not contain a segment of weight $k$ or less. Since $\lambda_k\le3kn-\frac{5}{2}k^2-\frac{5}{2}k$, this is true as long as
\begin{equation*}
    k<\frac{3n-\frac{5}{2}-\sqrt{(4+5\alpha)n^{2}-(10+5\alpha)n+\frac{25}{4}}}{5}
    =\left(\frac{3-\sqrt{4+5\alpha}}{5}\right)n+O\left(1\right). \qedhere
\end{equation*}
\end{proof}

As for the variant with at least $k$ points inside \emph{and} outside any disk with $p$ and $q$ on the boundary, we can define $\overline\lambda_{n-k}$ to be the total number of connected components over all bisectors of the union of all segments of weight greater than $n-k-1$. If $b(p,q)$ does not have a connected component, then any disk with $p$ and $q$ on the boundary will contain at most $n-k-1$ points. Bose et al.~\cite{DBLP:journals/corr/abs-2506-06477} proved that $\overline\lambda_{n-k}\le4kn-\frac{7}{2}k^2-\frac{7}{2}k$.

\begin{theorem}
\label{thm:geodesicrandominout}
Let $0<\alpha<1$ be a constant, and let $S$ be a set of $n$ points in a simple polygon $P$ with $m$ vertices, $r$ of which are reflex. With probability $\alpha$, a pair of points $\{p, q\} \in S$ chosen uniformly at random will have $\tilde{C}_S(p,q)\ge (7-\sqrt{37+12\alpha})n/12$.
\end{theorem}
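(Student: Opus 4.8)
The plan is to follow the template of the proof of Theorem~\ref{thm:geodesicrandom}, but now I must control bisectors that fail on \emph{either} the low-weight or the high-weight side and combine the two bounds. First I would translate the event $\tilde{C}_S(p,q)\ge k$ into a statement about segment weights: since a disk through $p$ and $q$ centred on a segment $s$ has $\omega(s)$ points inside and $n-\omega(s)-2$ outside, the condition $\tilde{C}_S(p,q)\ge k$ holds exactly when every segment of $b(p,q)$ satisfies $k\le\omega(s)\le n-k-2$. Hence $b(p,q)$ \emph{fails} this condition precisely when it carries a segment of weight at most $k-1$ or a segment of weight at least $n-k-1$.

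Next I would count the failing bisectors by a union bound. A bisector with a segment of weight at most $k-1$ contributes at least one connected component to the union of all segments of weight at most $k-1$, so the number of such bisectors is at most $\lambda_{k-1}$. Symmetrically, a segment of weight at least $n-k-1$ is a segment of weight greater than $n-k-2$, so the number of bisectors carrying such a segment is at most $\overline\lambda_{n-k-1}$ (taking the index $j=k+1$ in the definition, which counts components of segments of weight greater than $n-(k+1)-1=n-k-2$). Therefore the number of bad bisectors is at most $\lambda_{k-1}+\overline\lambda_{n-k-1}$. Plugging in the bounds $\lambda_{k-1}\le 3(k-1)n-\frac52(k-1)^2-\frac52(k-1)$ and $\overline\lambda_{n-k-1}\le4(k+1)n-\frac72(k+1)^2-\frac72(k+1)$ and collecting leading terms gives a bound of $7kn-6k^2+O(n)$, where the coefficient $7=3+4$ comes from the linear-in-$k$ terms and $6=\frac52+\frac72$ from the quadratic terms.

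Finally, as in Theorem~\ref{thm:geodesicrandom} I would require this count to be smaller than $(1-\alpha)\binom{n}{2}$, which guarantees at least $\alpha\binom{n}{2}$ good bisectors and hence probability at least $\alpha$ for a uniformly random pair. Writing $k=\beta n$, the inequality $7kn-6k^2<(1-\alpha)\tfrac{n^2}{2}$ becomes the quadratic $6\beta^2-7\beta+\tfrac{1-\alpha}{2}>0$, whose smaller root is $\beta=\frac{7-\sqrt{37+12\alpha}}{12}$ (using $49-12(1-\alpha)=37+12\alpha$). Thus any $k<\left(\frac{7-\sqrt{37+12\alpha}}{12}\right)n+O(1)$ works, matching the claimed bound.

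The main obstacle I anticipate is bookkeeping rather than conceptual: getting the off-by-one indices right so that the two failure events correspond \emph{exactly} to $\lambda_{k-1}$ and $\overline\lambda_{n-k-1}$, and confirming that these two quantities may simply be added (the union bound remains valid even though a single bisector might fail on both sides). The lower-order discrepancies between $\lambda_{k-1},\overline\lambda_{n-k-1}$ and their ``$k$-indexed'' counterparts affect only the $O(1)$ term, so the leading coefficient $\frac{7-\sqrt{37+12\alpha}}{12}$ is untouched. I would also double-check that the two Bose et al.\ bounds, proved for the separate one-sided problems, hold simultaneously on the same point set; since each is an unconditional upper bound on its own combinatorial quantity, summing them is legitimate.
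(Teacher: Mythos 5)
Your proposal is correct and follows essentially the same route as the paper: bound the number of bad bisectors by $\lambda+\overline\lambda$ using the two Bose et al.\ component bounds, require this sum to be below $(1-\alpha)\binom{n}{2}$, and solve the resulting quadratic, whose smaller root gives the coefficient $\bigl(7-\sqrt{37+12\alpha}\bigr)/12$. If anything, you are more careful than the paper's own write-up, which uses $\binom{n}{2}$ (rather than $(1-\alpha)\binom{n}{2}$) in the displayed condition and indexes the high-weight quantity as $\overline\lambda_{n-k}$ rather than $\overline\lambda_{n-k-1}$; as you note, these discrepancies only affect the $O(1)$ term.
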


\begin{proof}
In order to ensure the existence of a bisector with no segments of weight at most $k$ or at least $n-k-1$, we must have $\lambda_k + \overline\lambda_{n-k} < \binom{n}{2}$. Combining the upper bounds on $\lambda_k$ and $\overline\lambda_{n-k}$ together, we get $\lambda_k+\overline\lambda_{n-k}<7kn-6k^2-6k$. Therefore we have $\lambda_k + \overline\lambda_{n-k} < \binom{n}{2}$ if
\begin{equation}
    k<\frac{7n-6-\sqrt{(37+12\alpha)n^{2}-(72+12\alpha)n+36}}{12}=\left(\frac{7-\sqrt{37+12\alpha}}{12}\right)n+O(1). \qedhere
\end{equation}
\end{proof}

\begin{theorem}
Let $0<\alpha<1$ be a constant, and let $S$ be a set of $n$ points in a simple polygon $P$ with $m$ vertices, $r$ of which are reflex. A pair of points $\{p, q\} \in S$ such that $C_S(p,q)\ge (3-\sqrt{4+5\alpha})n/5$ or a pair of points such that $\tilde{C}_S(p,q)\ge (7-\sqrt{37+12\alpha})n/12$ can be found in $O(m+n(\log^2r+\log m+\log n))$ expected time, or with probability $1-1/n$ in $O(m+n\log n(\log^2r+\log m+\log n))$ time.
\end{theorem}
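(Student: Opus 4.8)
The plan is to transplant the randomized sampling argument from the proof of Theorem~\ref{thm:alg1} into the geodesic setting, replacing the Euclidean success-probability estimate with the geodesic bounds of Theorems~\ref{thm:geodesicrandom} and \ref{thm:geodesicrandominout}, and replacing Lemma~\ref{lem:alg1} with its geodesic analogue, Lemma~\ref{lem:alg1geo}. First I would perform the one-time preprocessing: compute $P'$, build the Guibas--Hershberger shortest-path data structure, and build a planar point-location structure. All of this takes $O(m)$ time and is carried out exactly once, before any sampling begins.

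The core of the algorithm is then a loop that repeatedly draws a pair $\{p,q\}$ uniformly at random, evaluates $C_S(p,q)$ (or $\tilde{C}_S(p,q)$ in the inside-and-outside variant) via Lemma~\ref{lem:alg1geo}, and halts as soon as the value reaches the target threshold $(3-\sqrt{4+5\alpha})n/5$ (respectively $(7-\sqrt{37+12\alpha})n/12$). By Theorem~\ref{thm:geodesicrandom} each draw succeeds with probability $\alpha$ for the $C_S$ variant, and by Theorem~\ref{thm:geodesicrandominout} the same holds for the $\tilde{C}_S$ variant. The key point for the running time is that, once the data structures are built, a single evaluation of $C_S$ costs only $O(n(\log^2 r + \log m + \log n))$: the $O(m)$ term appearing in Lemma~\ref{lem:alg1geo} is purely the cost of constructing those shared data structures and is not repeated per iteration.

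For the expected-time bound, the number of draws until the first success is geometric with mean $\alpha^{-1} = O(1)$, since $\alpha$ is constant. The expected total is therefore the one-time $O(m)$ plus $O(1)$ evaluations, each of cost $O(n(\log^2 r + \log m + \log n))$, which collapses to the claimed $O(m + n(\log^2 r + \log m + \log n))$. For the high-probability bound I would fix the number of draws at $\log_{(1-\alpha)^{-1}} n = O(\log n)$; the probability that every draw fails is $(1-\alpha)^{\log_{(1-\alpha)^{-1}} n} = 1/n$, so a qualifying pair is returned with probability $1 - 1/n$, at a cost of the one-time $O(m)$ plus $O(\log n)$ evaluations, i.e.\ $O(m + n\log n(\log^2 r + \log m + \log n))$.

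The one point that requires care, rather than a genuine obstacle, is the separation of the $O(m)$ preprocessing from the per-iteration work in Lemma~\ref{lem:alg1geo}: treating it as a per-call cost would multiply it by the iteration count and wrongly inflate both running times. Once that accounting is made explicit, the two threshold variants are handled in complete parallel and the argument is a direct copy of the Euclidean proof.
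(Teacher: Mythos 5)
Your proposal is correct and matches the paper's own proof: repeated uniform sampling of pairs, success probability $\alpha$ per draw from Theorems~\ref{thm:geodesicrandom} and~\ref{thm:geodesicrandominout}, evaluation via Lemma~\ref{lem:alg1geo}, with $\alpha^{-1}$ expected draws for the expected-time bound and $\log_{(1-\alpha)^{-1}}(n)$ draws for the high-probability bound. Your explicit separation of the one-time $O(m)$ preprocessing from the per-evaluation cost is exactly the accounting the paper relies on implicitly (its high-probability bound multiplies only the $n(\log^2 r+\log m+\log n)$ term by $\log n$), so there is nothing to add.
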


\begin{proof}
We know that with probability $\alpha$, a pair chosen at random will have the required property (by either Theorem~\ref{thm:geodesicrandom} or Theorem~\ref{thm:geodesicrandominout}). By Lemma~\ref{lem:alg1geo}, the minimum and maximum number of points in any disk through the pair can be computed in $O(m+n(\log^2r+\log m+\log n))$ time. The expected number of pairs that must be tested before a success is $\alpha^{-1}$, leading to an expected running time of $O(m+n(\log^2r+\log m+\log n))$.

If we choose $\log_{(1-\alpha)^{-1}}(n)$ pairs, then the probability of \emph{no} successes is $(1-\alpha)^{\log_{(1-\alpha)^{-1}}(n)}={1}/{n}$. Thus the probability of success is $1-1/n$. Running the algorithm of Lemma~\ref{lem:alg1geo} $O(\log n)$ times takes $O(m+n\log n(\log^2r+\log m+\log n))$ time.
\end{proof}

\subsection{Algorithms based on higher-order geodesic Voronoi diagrams}

\begin{figure}
    \centering
    \includegraphics{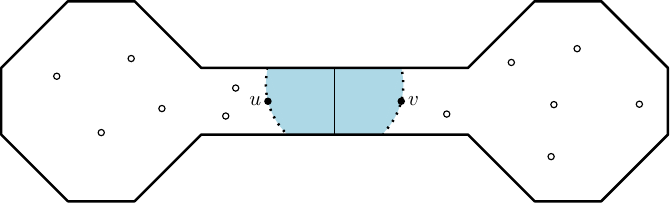}
    \caption{A simple polygon and two points $u$ and $v$ whose bisector consists of a single segment of weight $0$. No geodesic disk with $u$ and $v$ on the boundary contains another point of the set.}
    \label{fig:weight-zero-bisector}
\end{figure}

In the geodesic setting, the observations of Claverol et al.\ do not apply since bisectors are bounded by the boundary of the polygon. If $b(p,q)$ does not have a segment of weight $k$, we cannot necessarily conclude that $b(p,q)$ does not have a segment of weight less than $k$. It is possible that $b(p,q)$ consists of a single segment of weight $0$, see Figure~\ref{fig:weight-zero-bisector} for example. In order to adapt the brute force algorithm for the geodesic setting, we would need to construct the order-$j$ geodesic Voronoi diagrams for all $1\le j\le k$ and take their union.

Oh and Anh~\cite{DBLP:journals/dcg/OhA20} give an algorithm for the order-$k$ geodesic Voronoi diagram in a simple polygon which takes $O(k^2n\log n\log^2r+\min\{rk,r(n-k)\})$ time. Bose et al.\ \cite{wadspaper} proved that $\Pi(n)\ge n/5$ in the geodesic setting, so computing the order-$k$ geodesic Voronoi diagrams from $k=1$ to $k=n/5$ gives an $O(n^4\log n\log^2r + rn^2)$ time algorithm. On the other hand, the brute-force algorithm of trying all $\binom{n}{2}$ pairs and computing $C_S(p,q)$ for each of them takes $O(m + n^3(\log^2r + \log m + \log n))$ time, which is faster by a linear factor. The brute-force algorithm also has the advantage of finding the actual pair that maximizes $C_S(p,q)$, which the order-$k$ Voronoi diagram approach does not necessarily do.

\subsection{An algorithm for points in geodesically convex position}

As in the Euclidean case, when the points of $ S $ are in geodesically convex position, we assume they are given in order along the geodesic convex hull of $S$. In order to generalize the algorithm for points in convex position to the geodesic setting, we need to prove Lemma~\ref{lem:hayward} holds in the geodesic setting.

\begin{lemma}
Let $S$ be a set of points in a simple polygon $P$, and let $p$ and $q$ be points of $S$. Let $D$ be a geodesic disk through $p$ and $q$. The path $g(p,q)$ cuts $D$ into two regions. If both of these regions contain at least $k$ points, then $C_S(p,q) \ge k$.
\end{lemma}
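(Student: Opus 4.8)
The plan is to follow the structure of the Euclidean argument behind Lemma~\ref{lem:hayward}. By definition, $C_S(p,q)\ge k$ means that \emph{every} geodesic disk $D'$ through $p$ and $q$ contains at least $k$ points of $S$, so the goal is to prove this for an arbitrary $D'$, using the given disk $D$ (whose two regions each contain at least $k$ points) as leverage. Recall that the centre of any geodesic disk through $p$ and $q$ lies on the bisector $b(p,q)$, which under our general position assumption is a simple curve crossing $g(p,q)$ at its midpoint $m$ and meeting $\partial P$ at two ends; call the end lying in the left region $e_L$ and the one in the right region $e_R$. Let $o$ be the centre of $D$ and $o'$ the centre of $D'$. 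Since $b(p,q)$ is topologically an arc, $o'$ lies on either the $e_L$-side or the $e_R$-side of $o$, and I would split into these two symmetric cases.

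The engine of the proof is a per-point monotonicity statement: for a fixed $x\in S\setminus\{p,q\}$, I track whether $x$ lies in $D(o,|g(o,p)|)$ as the centre $o$ moves along $b(p,q)$. The point $x$ switches between inside and outside exactly when $|g(o,x)|=|g(o,p)|=|g(o,q)|$, i.e.\ when $o$ is the centre of a geodesic disk through $p$, $q$, and $x$. By the uniqueness theorem of Oh and Ahn~\cite{DBLP:journals/dcg/OhA20}, there is at most one such centre, so $x$ changes status at most once along $b(p,q)$. Hence the set $I_x=\{o\in b(p,q): x\in D(o,|g(o,p)|)\}$ is a connected sub-arc of $b(p,q)$, namely a ray from one of the two ends. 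I would then establish the crucial orientation fact: if $x$ is to the left of $g(p,q)$ then $I_x$ emanates from $e_L$, and if $x$ is to the right then it emanates from $e_R$. Intuitively, pushing the centre toward $e_L$ inflates the disk's left bulge and shrinks the right one, so left points can only enter (never leave) as $o\to e_L$; I expect to make this precise using the convexity of geodesic distance along a geodesic (Lemma~\ref{lem:pollack}) together with the uniqueness of shortest paths, or via a limiting argument at the extreme centres $e_L$ and $e_R$.

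Granting the orientation fact, the conclusion is immediate. Suppose $o'$ lies on the $e_L$-side of $o$. For every left point $x$ contained in $D$ we have $o\in I_x$, and since $I_x$ is a ray from $e_L$ containing $o$ while $o'$ lies between $o$ and $e_L$, it follows that $o'\in I_x$, i.e.\ $x\in D'$. Thus $D'$ contains all of the (at least $k$) left points of $D$, so $D'$ contains at least $k$ points of $S$. The case where $o'$ is on the $e_R$-side is symmetric, using the right points of $D$. General position guarantees that no third point of $S$ lies on $g(p,q)$, so every point of $S\setminus\{p,q\}$ is counted as strictly left or strictly right, and the two regions partition the relevant points cleanly.

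The main obstacle is the orientation fact, namely establishing that $I_x$ is a ray from the end on $x$'s own side of $g(p,q)$, consistently for all points on that side. The per-point ``at most one transition'' claim is a clean consequence of Oh--Ahn uniqueness, but pinning down the \emph{direction} of each transition requires understanding how the geodesic disk's intersection with each side of $\ell(p,q)$ varies monotonically as the centre sweeps the bisector. I anticipate that Lemma~\ref{lem:pollack}, applied to the geodesic triangles formed by $o$, $p$ (or $q$), and $x$, will supply the needed monotonicity; the remaining care is to control the behaviour of centres near the ends $e_L$ and $e_R$, where the bisector meets $\partial P$.
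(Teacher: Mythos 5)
Your overall route is the same as the paper's: fix the given disk $D$ with centre $o$, split into two cases according to which side of $o$ (along $b(p,q)$) the centre $o'$ of an arbitrary disk $D'$ lies, and conclude via a one-sided containment statement that $D'$ keeps all of $D$'s points on the corresponding side of $g(p,q)$. The difference is in how that containment statement is obtained, and this is where your argument has a gap. The paper does not prove it at all: it cites Lemma~1 of Bose et al.~\cite{wadspaper}, which says exactly that if $c'$ lies on $b(p,q)$ to the left of $\ell(p,c)$, then every point of $D$ to the left of $\ell(p,q)$ lies in the disk through $p$ and $q$ centred at $c'$. This is precisely your ``orientation fact,'' and you explicitly leave it unproven (``I expect to make this precise\dots I anticipate that Lemma~\ref{lem:pollack}\dots will supply the needed monotonicity'').

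The part you do prove --- that each $I_x$ is a connected sub-arc of $b(p,q)$, via the Oh--Ahn uniqueness of the geodesic disk through three points~\cite{DBLP:journals/dcg/OhA20} --- is correct and is a nice observation, but it is not sufficient: connectivity tells you $I_x$ is a ray from \emph{one} of the two ends of the bisector, and without knowing that the ray emanates from the end on $x$'s own side of $g(p,q)$, moving $o'$ toward $e_L$ could in principle eject left points rather than retain them, and the counting argument collapses. Pinning down that direction is the entire geometric content of the lemma in the geodesic setting (it is where the difficulty of geodesic disks, as opposed to Euclidean ones, actually lives), and your sketch of deriving it from Lemma~\ref{lem:pollack} applied to triangles $\triangle opx$ is only a plausible plan, not an argument. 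So the proposal is structurally sound and parallel to the paper's proof, but it is incomplete exactly at the step the paper discharges by citation; to finish it you would either cite \cite[Lemma~1]{wadspaper} as the paper does, or supply a genuine proof of the orientation fact.
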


\begin{proof}
Let $c$ be the centre of $D$. Consider another point $c'$ on $b(p,q)$ to the left of $\ell(p,c)$. Let $D'$ be the disk centred at $c'$ with radius $\abs{g(c',p)}$. Then by~\cite[Lemma 1]{wadspaper}, any point in $D$ to the left of $\ell(p,q)$ is contained in $D'$, so $D'$ contains at least $k$ points of $S$. The same argument shows that a disk $D''$ through $p$ and $q$ centred at a point $c''$ on $b(p,q)$ to the right of $\ell(p,c)$ must also contain at least $k$ points. So, any disk through $p$ and $q$ contains at least $k$ points.
\end{proof}




In the plane, the pair of points $\{p,q\}$ such that $C_S(p,q)\ge n/3$ must be two of three points on the boundary of some enclosing disk through three points. This enclosing disk corresponds to a vertex of the farthest-point Voronoi diagram. In a simple polygon, the farthest-point Voronoi diagram may not have any vertex. However, in that case then there always exists a pair of points $\{p,q\}$ such that the \emph{diametral disk} through $p$ and $q$, that is the one centred on the midpoint of $g(p,q)$, contains the entire set $S$~\cite{wadspaper}. Note that such a disk is centred on an edge of the farthest-point geodesic Voronoi diagram. So we compute the farthest-point geodesic Voronoi diagram, and if it has at least one vertex, we can proceed as in Section~\ref{sec:convex}, checking each vertex in constant time. If it does not have any vertex, then we can check each pair of points that corresponds to an edge of the diagram in constant time.

The farthest-point geodesic Voronoi diagram of a set of $n$ points in a simple polygon with~$m$ vertices can be computed in $O(m + n\log n)$ time~\cite{DBLP:journals/dcg/Wang23}. Notice that here, using the polygon simplification of Aichholzer et al.\ would not change the running time, as it takes $O(m)$ time to compute the subsuming polygon $P'$. It is also known that the farthest-point geodesic Voronoi diagram has $O(n)$ vertices and edges. Since the correspondence between vertices of the farthest-point geodesic Voronoi diagram and enclosing disks with three points on their boundary given in Lemma~\ref{lem:foo} holds in a simple polygon, we can compute the farthest-point geodesic Voronoi diagram and use the same algorithm to find a pair of points with $C_S(p,q)\ge n/3$.

\begin{theorem}
Given a set $S$ of $n$ points in geodesically convex position in a simple polygon $P$ with $m$ vertices, a pair of points $p,q\in S $ such that $C_S(p,q)\ge n/3$ can be found in $O(m+n\log n)$ time.
\end{theorem}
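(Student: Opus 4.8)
The plan is to follow the Euclidean convex-position algorithm of Section~\ref{sec:convex} step for step, replacing chords, enclosing disks, and the farthest-point Voronoi diagram by their geodesic counterparts. Three ingredients make this work, all available above: the geodesic version of Lemma~\ref{lem:hayward}, which guarantees that if $g(p,q)$ splits a geodesic disk through $p$ and $q$ into two regions each containing at least $k$ points then $C_S(p,q)\ge k$; Lemma~\ref{lem:foo}, whose proof is purely metric and so holds verbatim in the geodesic setting, making the vertices of the farthest-point geodesic Voronoi diagram correspond exactly to enclosing geodesic disks through three points; and the existence result of Bose et al.~\cite{wadspaper}, which provides a pair $\{p,q\}$ with $C_S(p,q)\ge n/3$ lying on the boundary of a smallest enclosing geodesic disk. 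The unifying observation is that \emph{every} vertex and every interior point of an edge of the farthest-point geodesic Voronoi diagram is the centre of a geodesic disk that contains all of $S$: at a vertex three points are tied as farthest, and on an edge between the regions of $p$ and $q$ these two points are tied as farthest, so in either case the disk of that radius encloses $S$ and passes through the defining points.

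First I would compute the farthest-point geodesic Voronoi diagram of $S$ in $O(m+n\log n)$ time using Wang's algorithm~\cite{DBLP:journals/dcg/Wang23}; it has $O(n)$ vertices and edges. Suppose the diagram has at least one vertex. By Lemma~\ref{lem:foo} each vertex is the centre of an enclosing geodesic disk $D$ through three points $p_i,p_j,p_k$, and $D$ contains all of $S$. For each of the three pairs, $g$ cuts $D$ into two regions, and since $D$ encloses everything the number of points in each region equals the number of points of $S$ on the corresponding side of the extension path. Hence by the geodesic version of Lemma~\ref{lem:hayward} the value of $C_S$ for that pair is at least the smaller of the two side-counts. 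Iterating over all $O(n)$ vertices and returning a pair whose two side-counts are both at least $n/3$ therefore recovers the pair guaranteed by~\cite{wadspaper}.

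If the diagram has no vertex, I would instead iterate over its $O(n)$ edges. Here~\cite{wadspaper} guarantees a pair $\{p,q\}$ whose diametral geodesic disk contains all of $S$, and the centre of this disk lies in the interior of an edge. By the unifying observation above, the disk through $p$ and $q$ centred on that edge encloses $S$, so the geodesic version of Lemma~\ref{lem:hayward} again bounds $C_S(p,q)$ below by the smaller side-count; checking each edge in constant time and returning a pair whose two side-counts are both at least $n/3$ locates it.

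The step I expect to be the crux is evaluating, for a candidate pair, its two side-counts in constant time. In the Euclidean convex case this is immediate because $L(p_i,p_j)$ splits the cyclically ordered hull points into the contiguous arcs $\{p_{i+1},\dots,p_{j-1}\}$ and its complement, of sizes $j-i-1$ and $n-(j-i)-1$, as used in Algorithm~\ref{alg:convex}. The work is to establish the geodesic analogue: that the extension path $\ell(p_i,p_j)$ of a geodesic hull chord separates the points of $S$, which all lie on the geodesic convex hull, into exactly these two contiguous arcs of the given cyclic order. This should follow from geodesic convexity, since $g(p_i,p_j)$ stays inside the hull and its extension reaches the polygon boundary, thereby cutting the hull boundary into two arcs; once it is verified, both counts are read off from the indices $i$ and $j$ alone. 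Each vertex or edge is then processed in constant time, the loop costs $O(n)$, and the running time is dominated by the $O(m+n\log n)$ construction of the diagram.
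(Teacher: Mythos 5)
Your proposal matches the paper's own proof in all essentials: both prove the geodesic analogue of Lemma~\ref{lem:hayward}, invoke the geodesic validity of Lemma~\ref{lem:foo} to identify vertices of the farthest-point geodesic Voronoi diagram with enclosing disks through three points, split into the vertex case and the no-vertex (diametral disk on an edge) case using the existence results of Bose et al., and charge the $O(m+n\log n)$ cost to Wang's construction of the diagram with $O(n)$ constant-time checks afterwards. Your explicit verification that the extension path of a hull chord splits $S$ into two contiguous arcs of the cyclic hull order is a detail the paper leaves implicit when it says to ``proceed as in Section~\ref{sec:convex},'' but it is the same argument, not a different route.
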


\subsection{An algorithm for diametral disks}

Similarly to the Euclidean setting, there always exists a pair $\{p,q\}$ in a given set $S$ of points in a simple polygon such that $B(p,q)$ contains at least $n/3$ points of $S$ (in this section, $B(p,q)$ will refer to the \emph{geodesic} disk with $b(p,q)$ as a diameter). Bose et al.~\cite{wadspaper} prove that, in any set $S$ of $n$ points in a simple polygon, there always exists either: an enclosing geodesic disk with three points $p$, $q$, $r$ on the boundary such that the centre of the disk is contained in $\triangle pqr$, or a pair $p$, $q$ of points such that $B(p,q)$ is an enclosing geodesic disk. In the case where three points are on the boundary of the enclosing geodesic disk, they prove that one of $B(p,q)$, $B(q,r)$, or $B(r,p)$ contains at least $n/3$ points of $S$.

\begin{theorem}
Let $S$ be a set of $n$ poins in a simple polygon $P$ with $m$ vertices. A pair of points $p,q\in S$ such that $B(p,q)$ contains at least $n/3$ points of $S$ can be found in $O(m + n\log n)$ time.
\end{theorem}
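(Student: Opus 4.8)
The plan is to mirror the Euclidean proof, replacing the smallest enclosing disk with the smallest enclosing geodesic disk (SEGD), and the covering lemma of Akiyama et al.\ with its geodesic analogue proved by Bose et al.\ First I would compute the SEGD of $S$ and examine how many points of $S$ lie on its boundary. If only two points $p$ and $q$ lie on the boundary, then $B(p,q)$ equals the SEGD, which encloses all $n \ge n/3$ points, so $\{p,q\}$ is returned. If three points $p$, $q$, $r$ lie on the boundary, then the structural dichotomy of Bose et al.\ guarantees that the SEGD realizes the three-point case, so its centre lies in $\triangle pqr$ and $D \subseteq B(p,q)\cup B(q,r)\cup B(r,p)$; hence at least one of the three diametral disks contains $\ge n/3$ points, and I would compute the three counts and return the pair whose disk is largest. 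Both the combinatorial $n/3$ bound and the covering statement are cited rather than reproved.

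To compute the SEGD within the time bound I would use the farthest-point geodesic Voronoi diagram, which is built in $O(m + n\log n)$ time and has $O(n)$ vertices and edges. The SEGD centre minimises the distance-to-farthest-site function, and since at any point where a single site is uniquely farthest one can move toward that site to strictly decrease the maximum, the minimum is attained on the $1$-skeleton of the FPGVD (a tree). I would therefore walk over its $O(n)$ edges: along an edge two sites $s_i,s_j$ are the common farthest pair, and by the convexity of geodesic distance (Lemma~\ref{lem:pollack}) the restriction of $|g(\cdot,s_i)|$ to the edge is convex, so its minimum is either an endpoint (an FPGVD vertex, where three sites tie) or the interior point that is the geodesic midpoint of $s_i$ and $s_j$. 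Taking the global minimum over all edges yields the SEGD centre and, simultaneously, tells me whether it is a three-point (vertex) configuration or a two-point diametral (edge-interior) configuration.

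Once the defining pair or triple is known, counting the points inside a candidate disk $B(\cdot,\cdot)$ reduces to finding the geodesic midpoint and radius $|g(\cdot,\cdot)|/2$ in $O(\log r)$ time with the Guibas--Hershberger structure, then testing for each of the $n$ points whether its geodesic distance to that midpoint is at most the radius, each test in $O(\log r)$. This is $O(n\log r)$ over the constant number of candidate disks, which is subsumed by the $O(m+n\log n)$ cost of building the FPGVD, so the whole procedure runs in $O(m + n\log n)$ time.

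The main obstacle is the SEGD computation together with the correct extraction of the defining configuration: unlike the plane, there is no off-the-shelf linear-time minimum enclosing geodesic disk routine to cite, so I must justify that minimising the farthest-distance function along the FPGVD tree both locates the SEGD centre and cleanly separates the two-point case from the three-point case, and then invoke the dichotomy of Bose et al.\ to conclude that the three-point case forces the centre into $\triangle pqr$ and thereby yields a diametral disk with at least $n/3$ points.
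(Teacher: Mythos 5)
Your route is genuinely different from the paper's---you compute the smallest enclosing geodesic disk (SEGD) and read the answer off its defining configuration, whereas the paper never computes the SEGD at all---but the crucial step of your argument has a genuine gap. The dichotomy of Bose et al.~\cite{wadspaper} is purely \emph{existential}: it asserts that \emph{some} enclosing geodesic disk either passes through three points $p,q,r$ with its centre inside $\triangle pqr$, or is a diametral disk $B(p,q)$. It says nothing about \emph{which} enclosing disk has this property, and in particular nothing about the smallest one. So neither of your inferences ``the SEGD has three boundary points $\Rightarrow$ its centre lies in $\triangle pqr$'' nor ``the SEGD has two boundary points $\Rightarrow$ the SEGD equals $B(p,q)$'' follows from the cited statement: a priori the SEGD could have three boundary points with its centre \emph{outside} their geodesic triangle, while the disk promised by the dichotomy is some other, larger enclosing disk. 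What you actually need is the geodesic analogue of the Euclidean fact that the centre of a minimum enclosing disk lies in the convex hull of the points on its boundary. That fact does hold in a simple polygon---if the centre were outside the geodesic hull of its farthest points, moving it slightly toward that hull would strictly decrease the distance to every farthest point (by convexity of geodesic distance along geodesics, Lemma~\ref{lem:pollack}, together with properties of geodesic projection) while keeping all strictly interior points inside, contradicting minimality---but it must be proved; it is exactly the ``main obstacle'' that your final paragraph defers back to the dichotomy, which cannot supply it.

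Note how the paper sidesteps this entirely: by Lemma~\ref{lem:foo} (which carries over to polygons), the vertices of the farthest-point geodesic Voronoi diagram are precisely the enclosing disks through three points, so the algorithm simply tests \emph{every} vertex for the centre-in-triangle property, and if all fail, the existential dichotomy then forces the diametral case, which is found by testing every edge (the midpoint of $g(p,q)$ lies on the edge iff its endpoints straddle $b(p,q)$). This exhaustive scan is a direct algorithmic transcription of the existence statement, so only the statement of Bose et al.'s result is needed, not any strengthening of it. Two smaller issues with your proposal: your per-edge minimization invokes Lemma~\ref{lem:pollack} for convexity along a Voronoi edge, but that lemma concerns distance along a \emph{geodesic}, and diagram edges are bisector arcs (segments and hyperbolic pieces); the correct tool is the paper's lemma that $\lvert g(p,\cdot)\rvert$ increases monotonically along $b(p,q)$ as one moves away from the midpoint of $g(p,q)$, which does give your ``endpoint or geodesic midpoint'' conclusion. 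Within the three-point case you should also locate the disk via Bose et al.'s counting statement rather than the covering $D \subseteq B(p,q)\cup B(q,r)\cup B(r,p)$, which is the Euclidean formulation. If you prove the hull property above, your SEGD approach does run in $O(m+n\log n)$ time; the paper's approach reaches the same bound with no new geometry.
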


\begin{proof}
Begin by constructing the farthest-point geodesic Voronoi diagram of $S$, and the shortest path data structure. Each vertex $v$ of the farthest-point diagram corresponds to an enclosing geodesic disk with three points $p$, $q$ and $r$ on its boundary. In $O(\log n)$ time we can test whether $v$ is inside $\triangle pqr$ using the shortest path data structure. We can test all vertices of the diagram in $O(n \log n)$ time. If we find a vertex $v$ such that $v$ is inside $\triangle pqr$, then one of the disks $B(p,q)$, $B(q,r)$, or $B(r,p)$ will contain at least $n/3$ points. We can find the midpoints of these disks in $O(\log n + m)$ time by constructing the shortest paths explicitly, then counting the number of points inside a given disk takes $O(n \log n)$ time.

If no vertex is inside the triangle defined by the three corresponding points, then there must be a diametral enclosing disk. We can find this by looking at the edges of the diagram. Let $e$ be an edge of the diagram defined by the two points $p$ and $q$. Note that $B(p,q)$ is an enclosing disk if and only if $b(p,q)$ intersects $e$, which we can test in $O(\log n)$ time by checking if the endpoints of $e$ are on opposite sides of $b(p,q)$. Since there are $O(n)$ edges in the diagram, this takes $O(n\log n)$ time in total.
\end{proof}

\subsection{Algorithms for bichromatic point sets}

We now consider the variant of the problem where the set of points is bichromatic. Let $S$ be a set of $n/2$ red and $n/2$ points in a simple polygon. We want to find a pair $\{p,q\}$ such that $p$ is red, $q$ is blue, and $C_S(p,q)\ge cn$ for some constant $c>0$. 

\begin{theorem}
\label{thm:alg1-bichromatic-polygon}
Let $0<\alpha<1$ be a constant, and let $S$ be a set of $n/2$ red points and $n/2$ blue points in a simple polygon $P$ with $m$ vertices, $r$ of which are reflex.
There are at least $\alpha(\frac{n}{2})^2$ pairs $\{p,q\}$ of points of $S$ such that $p$ is red, $q$ is blue, and $C_S(p,q)\ge (3/5 - \sqrt{(13+5\alpha)/50})n+O(1)$. Such a pair can be found in $O(m + n(\log^2r+\log m+\log n))$ expected time, or with probability $1-1/n$ in $O(m + n\log n(\log^2r+\log m+\log n))$ time.
\end{theorem}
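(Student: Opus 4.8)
The plan is to combine the bichromatic counting argument from the proof of Theorem~\ref{thm:alg1-bichromatic} with the geodesic connected-component bound used in Theorem~\ref{thm:geodesicrandom}. Among the $\binom{n}{2}$ bisectors determined by pairs of $S$, exactly $(n/2)^2$ are \emph{bichromatic}, i.e.\ determined by one red and one blue point. Recall that $\lambda_k$, the total number of connected components of the union of all segments of weight at most $k$ taken over \emph{all} bisectors, satisfies $\lambda_k \le 3kn - \tfrac{5}{2}k^2 - \tfrac{5}{2}k$ by Bose et al.~\cite{wadspaper}. Since every bisector that contains a segment of weight at most $k$ contributes at least one connected component, the number of bisectors (of any colour combination) possessing such a segment is at most $\lambda_k$; in particular, at most $\lambda_k$ of the bichromatic bisectors are ``bad.''

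First I would impose $\lambda_k < (1-\alpha)(n/2)^2$. Under this inequality, at most $(1-\alpha)(n/2)^2$ bichromatic bisectors are bad, so at least $\alpha(n/2)^2$ bichromatic pairs $\{p,q\}$ have the property that every segment of $b(p,q)$ has weight greater than $k$, whence $C_S(p,q) \ge k$. Substituting the upper bound on $\lambda_k$, the requirement $\lambda_k < (1-\alpha)(n/2)^2$ is implied by $3kn - \tfrac{5}{2}k^2 - \tfrac{5}{2}k < (1-\alpha)n^2/4$, a quadratic inequality in $k$. Solving for its smaller root and keeping only the leading term yields the threshold $k < \bigl(\tfrac{3}{5} - \sqrt{(13+5\alpha)/50}\bigr)n + O(1)$, which is exactly the bound claimed in the statement; the lower-order terms are absorbed into the $O(1)$.

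For the algorithmic claims I would proceed as in Theorem~\ref{thm:geodesicrandom}. After the one-time $O(m)$ preprocessing for the simplified polygon $P'$, the shortest-path data structure, and the point-location structure, each trial selects a red--blue pair uniformly at random and computes $C_S(p,q)$ using Lemma~\ref{lem:alg1geo} in $O(n(\log^2 r + \log m + \log n))$ time. Since at least $\alpha(n/2)^2$ of the $(n/2)^2$ bichromatic pairs are good, a random bichromatic pair succeeds with probability at least $\alpha$, so the expected number of trials is $\alpha^{-1} = O(1)$, and the expected running time is $O(m + n(\log^2 r + \log m + \log n))$. For the high-probability bound, running $\log_{(1-\alpha)^{-1}}(n)$ trials reduces the failure probability to $1/n$, giving the stated $O(m + n\log n(\log^2 r + \log m + \log n))$-time algorithm.

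The main obstacle is conceptual rather than computational: the bound on $\lambda_k$ counts connected components over \emph{all} bisectors, including monochromatic ones, so using it to bound only the bad \emph{bichromatic} bisectors is the source of the weaker constant here, exactly as in the Euclidean bichromatic Theorem~\ref{thm:alg1-bichromatic}. One should verify that this overcounting is harmless --- it only makes the inequality $\lambda_k < (1-\alpha)(n/2)^2$ harder to satisfy, not the conclusion weaker --- and that the $O(m)$ preprocessing of Lemma~\ref{lem:alg1geo} is incurred once rather than per trial, so that the factor $\alpha^{-1}$ is absorbed into the expected running time without inflating the additive $O(m)$ term.
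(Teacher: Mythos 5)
Your proposal is correct and follows essentially the same route as the paper: the paper's proof likewise applies the bound $\lambda_k \le 3kn - \tfrac{5}{2}k^2 - \tfrac{5}{2}k$ to all bisectors, imposes $\lambda_k < (1-\alpha)(\tfrac{n}{2})^2$ to guarantee $\alpha(\tfrac{n}{2})^2$ good red--blue bisectors, and solves the same quadratic to obtain $k < \bigl(\tfrac{3}{5} - \sqrt{(13+5\alpha)/50}\bigr)n + O(1)$. The algorithmic portion you spell out (random bichromatic sampling with Lemma~\ref{lem:alg1geo}, expected $\alpha^{-1}$ trials, and $\log_{(1-\alpha)^{-1}}(n)$ trials for the high-probability bound) is left implicit in the paper but is exactly the intended argument, carried over from the monochromatic geodesic theorem.
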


\begin{proof}
Recall that $\lambda_k$ is the number of connected components over all bisectors in the union of all segments of weight $i$, for $0\le i\le k$, and that $\lambda_k \le 3kn - \frac52k^2 - \frac52k$. If $\lambda_k<(1-\alpha)(\frac{n}{2})^2$, then there must be at least $\alpha(\frac{n}{2})^2$ red-blue bisectors that do not have any connected components, which implies every segment on each of those bisectors has weight greater than $k$. The largest value of $k$ for which this is true is
\begin{equation*}
    k < \frac{3n-\frac{5}{2}-\sqrt{(\frac{13}{2}+\frac{5}{2}\alpha)n^2-15n+\frac{25}{4}}}{5}=\bigg(\frac{3}{5}-\sqrt{\frac{13+5\alpha}{50}}\bigg)n + O(1). \qedhere
\end{equation*}
\end{proof}

Similarly to the monochromatic case, we have a separate theorem for the in-out version of the problem for the geodesic setting.

\begin{theorem}
\label{thm:alg1-bichromatic-inout-polygon}
Let $0<\alpha<1$ be a constant, and let $S$ be a set of $n/2$ red points and $n/2$ blue points in a simple polygon $P$ with $m$ vertices, $r$ of which are reflex.
There are at least $\alpha(\frac{n}{2})^2$ pairs $\{p,q\}$ of points of $S$ such that $p$ is red, $q$ is blue, and $\tilde{C}_S(p,q)\ge ((7-\sqrt{43+6\alpha})/12)n + O(1)$. Such a pair can be found in $O(m + n(\log^2r+\log m+\log n))$ expected time, or with probability $1-1/n$ in $O(m + n\log n(\log^2r+\log m+\log n))$ time.
\end{theorem}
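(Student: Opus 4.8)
The plan is to combine the connected-component bounds used in the in-out monochromatic geodesic result (Theorem~\ref{thm:geodesicrandominout}) with the bichromatic counting already exploited in Theorem~\ref{thm:alg1-bichromatic-polygon}. The existence claim is a purely combinatorial statement about bisectors, and the algorithmic claim reuses the random-sampling framework together with the weight-computation algorithm of Lemma~\ref{lem:alg1geo}.

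First I would recall the two counting ingredients. The union of all segments of weight at most $k$ has at most $\lambda_k\le 3kn-\frac52k^2-\frac52k$ connected components over all bisectors, and the union of all segments of weight greater than $n-k-1$ has at most $\overline\lambda_{n-k}\le 4kn-\frac72k^2-\frac72k$ connected components. Adding these gives $\lambda_k+\overline\lambda_{n-k}\le 7kn-6k^2-6k$. A bisector $b(p,q)$ that contributes no connected component to either union has every segment of weight strictly between $k$ and $n-k-1$, and hence certifies $\tilde{C}_S(p,q)\ge k$ (the exact integer offset being absorbed by the $O(1)$ in the statement). Crucially, since each bisector that fails this condition contributes at least one connected component to the total, the number of bisectors \emph{of any colour} carrying a low- or high-weight segment is at most $\lambda_k+\overline\lambda_{n-k}$.

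Now I would bring in the colours. There are exactly $(n/2)^2$ red-blue bisectors. If $\lambda_k+\overline\lambda_{n-k}<(1-\alpha)(n/2)^2$, then fewer than $(1-\alpha)(n/2)^2$ bisectors in total have a bad segment, so among the $(n/2)^2$ red-blue bisectors at least $\alpha(n/2)^2$ have none. Substituting the combined bound and solving the quadratic $7kn-6k^2-6k<(1-\alpha)(n/2)^2$ for its smaller root yields
\[
k<\frac{7n-6-\sqrt{(43+6\alpha)n^2-84n+36}}{12}=\Bigl(\tfrac{7-\sqrt{43+6\alpha}}{12}\Bigr)n+O(1),
\]
which is exactly the claimed threshold. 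The only calculation needing care is the discriminant: replacing the $\binom{n}{2}$ of the monochromatic case by $(1-\alpha)(n/2)^2$ subtracts $6(1-\alpha)n^2$ instead of $12(1-\alpha)n^2$ from $(7n)^2$, so the constant under the radical changes from $37+12\alpha$ to $49-6(1-\alpha)=43+6\alpha$.

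For the running time I would argue exactly as in the earlier randomized theorems. Repeatedly pick a red-blue pair uniformly at random and compute $\tilde{C}_S(p,q)$ via Lemma~\ref{lem:alg1geo}, whose algorithm proceeds as Algorithm~\ref{alg:one} and therefore returns $\tilde{C}_S(p,q)$ alongside $C_S(p,q)$. Each evaluation costs $O(m+n(\log^2r+\log m+\log n))$, with the $O(m)$ preprocessing of the shortest-path and point-location structures done once. Since a random red-blue pair succeeds with probability $\alpha$, the expected number of trials is $\alpha^{-1}=O(1)$, giving the stated expected time; taking $\log_{(1-\alpha)^{-1}}(n)=O(\log n)$ trials drives the failure probability to $1/n$ and multiplies the per-trial cost by $O(\log n)$, yielding the high-probability bound. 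I expect the main obstacle to be the counting step rather than the algebra or the timing: one must confirm that the \emph{global} connected-component count genuinely bounds the number of bad bisectors irrespective of colour, so that restricting attention to the $(n/2)^2$ bichromatic bisectors still leaves an $\alpha$-fraction of them good.
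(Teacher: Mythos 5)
Your proof is correct and follows essentially the same route as the paper: require $\lambda_k+\overline\lambda_{n-k}<(1-\alpha)\left(\frac{n}{2}\right)^2$ using the bounds $\lambda_k\le 3kn-\frac52k^2-\frac52k$ and $\overline\lambda_{n-k}\le 4kn-\frac72k^2-\frac72k$, solve the resulting quadratic for the smaller root, and reuse the random-sampling framework with Lemma~\ref{lem:alg1geo} for both time bounds. In fact your displayed root $k<\bigl(7n-6-\sqrt{(43+6\alpha)n^2-84n+36}\bigr)/12$ is the algebraically correct form; the paper's display misplaces the factor $12$ under the radical, though its asymptotic conclusion $\bigl(\tfrac{7-\sqrt{43+6\alpha}}{12}\bigr)n+O(1)$ agrees with yours.
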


\begin{proof}
We need $\lambda_k+\overline\lambda_{n-k}<(1-\alpha)(\frac{n}{2})^2$. We can use the upper bound of $\lambda_k + \overline\lambda_{n-k} \le 7kn - 6k^2 - 6k$. Combining these and solving for $k$ gives
\begin{equation*}
    k<\frac{7n-6}{12}-\sqrt{\frac{(43+6\alpha)n^2-84n+36}{12}} = \bigg(\frac{7-\sqrt{43+6\alpha}}{12}\bigg)n + O(1). \qedhere
\end{equation*}
\end{proof}

\section{Conclusion}

We presented several different algorithms for finding a pair $p$ and $q$ such that $C_S(p,q)\ge cn$, for some $c>0$, given a set $S$ of $n$ points in the plane. The first two of these algorithms are for sets of points in general position. The other algorithms are for variations of the problem: when the points are in convex position, when we are only interested in finding a pair such that the diametral disk of that pair contains many points, and when the set of points is bichromatic. We also studied how these algorithms can be generalized to the geodesic setting, where $S$ is a set of $n$ points inside a simple polygon $P$.

Our algorithms for points in convex position and for the diametral version of the problem match the known value of $\overline\Pi(n)$, so they can only be improved by finding a faster algorithm. Such an algorithm cannot use the farthest-point geodesic Voronoi diagram, since optimal algorithms for constructing these diagrams match our current running times. The situation is more open for points in general position. Is there a sub-quadratic algorithm that is guaranteed to find a pair with $C_S(p,q)\ge\Pi(n)$? Another interesting problem is improving the algorithm that finds the pair that maximizes $C_S(p,q)$ for a given point set~$S$.

\bibliography{ref}

\end{document}